\newtheorem{theorem}{Theorem}[section]
\newtheorem*{theorem*}{Theorem}
\newtheorem*{claim*}{Claim}
\newtheorem{proposition}[theorem]{Proposition}
\newtheorem*{proposition*}{Proposition}
\newtheorem{lemma}[theorem]{Lemma}
\newtheorem*{lemma*}{Lemma}
\newtheorem{corollary}[theorem]{Corollary}
\newtheorem*{conjecture*}{Conjecture}
\newtheorem{fact}[theorem]{Fact}
\newtheorem*{fact*}{Fact}
\newtheorem*{hypothesis*}{Hypothesis}
\theoremstyle{definition}
\newtheorem{algorithm}[theorem]{Algorithm}
\newtheorem{assumption}[theorem]{Assumption}
\newcommand{\savehyperref}[2]{\texorpdfstring{\hyperref[#1]{#2}}{#2}}
\newcommand{\Sref}[1]{\hyperref[#1]{\S\ref*{#1}}}
\renewcommand{\mathbb}{\varmathbb} 
\renewcommand{\leq}{\leqslant}
\renewcommand{\le}{\leqslant}
\renewcommand{\geq}{\geqslant}
\newenvironment{mybox}
{\center \noindent\begin{boxedminipage}{1.0\linewidth}}
{\end{boxedminipage}
\noindent
}
\newcommand{\mper}{\,.}
\newcommand{\paren}[1]{\left(#1 \right )}
\newcommand{\set}[1]{\left\{#1\right\}}
\newcommand{\Abs}[1]{\left\lvert#1\right\rvert}
\newcommand{\ceil}[1]{\lceil #1 \rceil}
\newcommand{\norm}[1]{\left\lVert#1\right\rVert}
\newcommand{\fnorm}[1]{\norm{#1}_F}
\newcommand{\defeq}{\stackrel{\textup{def}}{=}}
\newcommand{\normi}[1]{\norm{#1}_{\scriptstyle \infty}}
\newcommand{\Z}{{\mathbb Z}}
\newcommand{\N}{{\mathbb Z}_{\geq 0}}
\newcommand{\R}{\mathbb R}
\newcommand{\Q}{\mathbb Q}
\newcommand{\e}{\epsilon}
\definecolor{DSgray}{cmyk}{0,0,0,0.7}
\let\e\varepsilon
\newcommand{\cB}{\mathcal B}
\newcommand{\cT}{\mathcal T}
\newcommand{\etal}{et. al.}
\newcommand{\bigO}{\mathcal{O}}
\newcommand{\bigo}[1]{\bigO\left(#1\right)}
\newcommand{\tbigO}{\tilde{\mathcal{O}}}
\newcommand{\tbigo}[1]{\tbigO\left(#1\right)}
\renewcommand{\defeq}{:=}
\title{Accelerated Newton Iteration:
 Roots of Black Box Polynomials\\ and Matrix Eigenvalues}
\author{Anand Louis\thanks{Supported by the Simons Collaboration on Algorithms and Geometry.}
\\Princeton University \\alouis@princeton.edu \and 
	Santosh S. Vempala\thanks{Supported in part by NSF awards CCF-1217793 and EAGER-1555447.}
 \\ Georgia Tech \\ vempala@gatech.edu}
\newcommand{\charp}{f} 
\newcommand{\mom}{g}  
\newcommand*\diff{\mathop{}\!\mathrm{d}}
\newcommand{\dx}{{\sf dx}}
\newcommand{\tg}{\tilde{g}}
\newcommand{\ut}{u^t}
\newcommand{\tut}{\tilde{u}^t}
\newcommand{\ralsymb}{\mathscr{R}}
\newcommand{\eprime}{\e'}
\def\eps{\epsilon}
\newcommand{\ftime}[1]{\cT_{\charp}\paren{#1}}
\newcommand{\bitcomp}[1]{\cB\paren{#1}}
\date{}
\begin{document}

\begin{titlepage}

\maketitle

\begin{abstract}
We study the problem of computing the largest root of a real rooted polynomial $p(x)$ to within error $\e$ given only 
black box access to it, i.e., for any $x \in \R$, the algorithm can query an oracle for the
value of $p(x)$, but the algorithm is not allowed access to the coefficients of $p(x)$. 
A folklore result for this problem is that the largest root of a polynomial can be
computed in $\bigo{n \log (1/\e)}$ polynomial queries 
using the Newton iteration.
We give a simple algorithm that queries the oracle at only $\bigo{{\log n} \log(1/\e)}$ points,
where $n$ is the degree of the polynomial.
Our algorithm is based on a novel approach for accelerating the Newton method by using higher derivatives.

As a special case, we consider the problem of computing the top eigenvalue of a symmetric matrix  in $\Q^{n \times n}$ to within error $\e$ in time polynomial in the input description, i.e., the number of bits to describe the matrix and $\log(1/\e)$. Well-known methods such as the power iteration and Lanczos iteration incur running time  
polynomial in $1/\e$, while Gaussian elimination takes $\Omega(n^4)$ bit operations.
As a corollary of our main result, 
we obtain a $\tbigo{n^{\omega}  \log^2 ( \fnorm{A}/\e)}$  bit complexity algorithm to compute
the top eigenvalue of the matrix $A$ or to check if it is approximately PSD ($A \succeq -\eps I$).
\end{abstract}

\end{titlepage}

\section{Introduction}

Computing the roots of a polynomial is a fundamental algorithmic 
problem. 
According to the folklore Abel-Ruffini theorem,
polynomials of degree five or higher do not have any algebraic solution in general,
and the roots of polynomials can be irrational. Therefore, the roots of a polynomial
can only be computed only to some desired precision. 
The classical Newton's method (also known as the Newton-Raphson method) is an 
iterative method to compute the roots of a real rooted polynomial.
Starting with an initial upper bound $x^0 \in \R$ on the largest root of a polynomial
$\charp(x)$ of degree $n$, the Newton's method recursively computes better estimates
to the largest root as follows
\[ x^{t+1} \defeq x^t - \frac{ \charp(x^t)  }{\charp'(x^t)}  \mper \] 
A folklore result is that after $\bigo{n \log (x^0/\e)}$ iterations, 
$x^t$ will be $\e$-close to the the largest root of the polynomial.

We study the problem of computing the largest root of a real rooted polynomial $p(x)$ given only 
blackbox access to it, i.e., for any $x \in \R$, the algorithm can query an oracle for the
value of $p(x)$, but the algorithm is not allowed access to the coefficients of $p(x)$. 
This model is useful when the polynomial is represented implicitly, and each evaluation
of the polynomial is computationally expensive. An important example is the characteristic polynomial,
say $\charp(x)$, of a matrix, say $A$; each evaluation of $\charp(x)$ amounts to computing
the determinant of the matrix $\paren{A - xI}$. More generally,   
equations involving determinants of polynomial matrices  fall into this category.
A slightly modified Newton's method can be used to compute the largest root of a polynomial 
using $\bigo{n \log (x^0/\e)}$ black box queries; we review this in \prettyref{sec:Newton}.

\paragraph{Computational Model.}
The two most common ways of measuring the time complexity of an algorithm are its 
{\em arithmetic} complexity, and its {\em boolean} or {\em bit} complexity.
Arithmetic complexity counts the number of basic arithmetic operations (i.e. addition, subtraction,
multiplication, division) required to execute an algorithm, whereas boolean/bit complexity
counts the number of bit operations required to execute the algorithm.
For most algorithms for combinatorial optimization problems, these two notions of time complexity are 
roughly the same with arithmetic operations being done on $\bigo{\log n}$-bit numbers.
However, for many numerical algorithms, they can differ vastly.
For example, 
Gaussian elimination is usually said to take $\bigo{n^3}$ time, but this usually refers to the number of 
arithmatic operations, and if done naively, the intermediate bit complexity can be exponential in $n$
\cite{b66,f77}.
However, using a more careful variant of 
Gaussian elimination due to Edmonds \cite{e67}, the bit complexity is known to be $\tbigo{n^4}$ 
(see also \cite{bar68,dix82,s98}). 
In this paper, we will be bounding the bit complexity of our algorithms.

\paragraph{Matrix Eigenvalues.}
The eigenvalues of a matrix are also the roots of its characteristic polynomial.
For a matrix $A \in \Z^{n \times n}$, with eigenvalues $\lambda_1 \geq \lambda_2 \geq  \ldots \geq \lambda_n$ let 
$\charp$ denote its characteristic polynomial, i.e.,
\[ \charp(x) \defeq \det \paren{x I - A } = \Pi_{i \in [n]} (x - \lambda_i) \mper \] 
We note that the algorithms for computing the roots of a polynomial are not directly useful 
for computing the eigenvalues of a matrix, as
computing the characteristic polynomial of a matrix is a computationally non-trivial task; 
the current best algorithm to compute the characteristic polynomial of a matrix 
is due to Kaltofen and Villard \cite{kv05b} achieving bit complexity $\tbigo{n^{2.697}}$.

It is well-known that the top eigenvalue can be approximated to any desired accuracy 
$\e$ in polynomial time; indeed all $n$ eigenvalues of an $n \times n$ matrix or all singular values 
of an $m \times n$ matrix can be 
computed; for the latter, it is common to state that the asymptotic 
time complexity is $ \bigo{ \min \set{ mn^2, nm^2} }$. However, this 
bound does not reflect the dependence on $\e$. Standard iterative methods for SVD 
can take time that grows polynomially with $1/\e$, which is undesirable.
The most popular algorithm for computing the top eigenvalue of a matrix is the 
{\em Power Iteration} algorithm \cite{mp29} having a running time
bound of $\bigo{n^2 (\log n)/\e}$, and this bound is tight when the matrix has its top few 
eigenvalues close to each other.

We seek algorithms for computing the top eigenvalue of a matrix whose running time is polynomial  
in the input description, i.e., the number of bits to describe the matrix $A$ and 
the parameter $\e$, the latter being $\log (1/\e)$ bits.

\subsection{Our results}
In this paper
 we study an alternative approach, inspired by the classical Newton iteration for finding roots of polynomials. 
Applying the Newton iteration (see \prettyref{sec:Newton}), we see that the iterates converge 
within $\tbigo{n}$ iterations.  
Can we do better than this? Our main idea is to accelerate the Newton iteration using higher derivatives 
of the polynomial. The standard generalization 
to Householder methods via higher derivatives \cite{h70,or70} does not give any significant benefit. In \prettyref{sec:algo}, 
we give an new iteration based on higher derivatives that converges faster, 
yielding our following main result.
The complete algorithm is described in \prettyref{fig:accnewton}.
\begin{theorem}
\label{thm:poly-blackbox}
Given black-box access to a monic real rooted polynomial $\charp$ of degree $n$, an upper bound $\gamma$ 
on the absolute value of its roots,
and an error parameter $\e \in (0,1/2]$, there exists a deterministic algorithm that queries $\charp$
at $\bigo{\log n \log (\gamma/\e) }$ locations, 
each having precision $\bigo{ \log n \log (n\gamma/\e)}$ bits, 
and outputs an $x \in \Q$ satisfying
$\lambda_1 \leq x \leq \lambda_1 + \e$,
where $\lambda_1$ is the largest root of $\charp$. 
\end{theorem}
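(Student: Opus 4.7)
The plan is to accelerate the Newton iteration by exploiting a $k$-th order analogue of the first derivative. For $x > \lambda_1$, define the power sum
\begin{equation*}
p_k(x) \defeq \sum_{i=1}^n \frac{1}{(x - \lambda_i)^k} \mper
\end{equation*}
All terms are positive and the largest is $1/(x-\lambda_1)^k$, so $p_k$ satisfies the sandwich
\begin{equation*}
\frac{1}{(x - \lambda_1)^k} \;\leq\; p_k(x) \;\leq\; \frac{n}{(x - \lambda_1)^k} \mper
\end{equation*}
Taking $k$-th roots shows that $p_k(x)^{-1/k}$ approximates the gap $x - \lambda_1$ within a multiplicative factor of $n^{1/k}$; choosing $k = \lceil \log_2 n \rceil$ makes this factor at most $2$.

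This suggests the iteration $x^{t+1} \defeq x^t - p_k(x^t)^{-1/k}$, starting from $x^0 = \gamma$. First I would establish the convergence analysis: with $u^t \defeq x^t - \lambda_1$, the sandwich immediately gives $u^t / n^{1/k} \leq p_k(x^t)^{-1/k} \leq u^t$, hence $0 \leq u^{t+1} \leq (1 - n^{-1/k}) u^t \leq u^t/2$. So $x^t$ remains strictly above $\lambda_1$ throughout and $T = \bigo{\log(\gamma/\e)}$ iterations suffice to drive the error below $\e$.

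Next, I would implement a single iteration using only $\bigo{\log n}$ oracle queries. The crucial identity is
\begin{equation*}
\frac{\diff^k}{\diff x^k} \log \charp(x) \;=\; (-1)^{k-1}(k-1)!\, p_k(x) \mcom
\end{equation*}
so $p_k(x^t)$ is a fixed rational expression in $\charp(x^t), \charp'(x^t), \ldots, \charp^{(k)}(x^t)$, obtained by unrolling logarithmic differentiation. The $k+1$ derivatives of $\charp$ at $x^t$ can in turn be extracted from the $k+1$ queried values $\charp(x^t + jh)$, $j = 0, \ldots, k$, by truncating the Taylor expansion at degree $k$ and inverting the resulting Vandermonde system on $\{jh\}_{j=0}^{k}$. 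Each iteration then costs $k+1 = \bigo{\log n}$ black-box queries, for a total of $\bigo{\log n \log(\gamma/\e)}$, as claimed.

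The main obstacle will be the numerical precision analysis. The Vandermonde system on equally spaced nodes is ill-conditioned in $k$, and the truncated Taylor tail contributes an error of order $\sum_{l > k} |\charp^{(l)}(x^t)|\,(kh)^{l}/l!$, which forces $h$ to be small relative to $u^t$; together with arithmetic error in the subsequent logarithmic-derivative computation this is what determines the $\bigo{\log n \log(n\gamma/\e)}$ bit-precision requirement stated in the theorem. I would quantify these two error sources, then observe that a multiplicative error of a constant factor in the computed $\tilde{p}_k(x^t)$ inflates to only a factor $1 + O(1/\log n)$ in $\tilde{p}_k(x^t)^{-1/k}$ (since we raise to the small power $-1/k$). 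Finally, a small upward safety bias on $\tilde{p}_k$ preserves the invariant $x^{t+1} > \lambda_1$, and the $1/2$-contraction in the exact analysis survives with an $o(1)$ perturbation, yielding \prettyref{thm:poly-blackbox}.
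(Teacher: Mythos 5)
Your high-level plan coincides with the paper's: both use the power sums $p_k(x) = \sum_i (x - \lambda_i)^{-k}$ (the paper's $\mom_k$), both exploit the sandwich that pins $x - \lambda_1$ to within a multiplicative $n^{1/k}$, and both take $k = \lceil \log n\rceil$ so that $O(\log(\gamma/\e))$ iterations suffice. Two of your design choices differ from the paper's, and the second is where a real gap remains.

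\emph{The update.} You take the step $p_k(x^t)^{-1/k}$, whereas the paper steps by $\frac{1}{n^{1/k}}\,\mom_{k-1}(x^t)/\mom_k(x^t)$ and derives the same contraction via \prettyref{lem:knorm}. Both satisfy the same two-sided bound, and your remark that a constant-factor error in $\tilde p_k$ is damped to $1 + O(1/k)$ by the $k$-th root is a genuine point in favor of the root form. The ratio form buys the paper rational arithmetic (no $k$-th roots to extract), which simplifies the bit-complexity bookkeeping, but either works.

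\emph{Computing $p_k$.} You propose (i) recovering $\charp, \charp', \ldots, \charp^{(k)}$ at a point by Vandermonde inversion on $k+1$ samples, then (ii) plugging into the Fa\`a di Bruno expansion of $\frac{d^k}{dx^k}\log \charp$ as a polynomial in $\charp'/\charp, \ldots, \charp^{(k)}/\charp$. The Vandermonde step is not the issue — it is the same $h^{-k}$ error amplification a forward-difference formula incurs, and the paper absorbs the analogous factor. The issue is cancellation in step (ii): by \prettyref{fact:charpderivative}, $\charp^{(j)}/\charp = j!\sum_{|S|=j}\prod_{i\in S}(x-\lambda_i)^{-1}$, which can be as large as $\binom{n}{j} j!/(x-\lambda_1)^j$ when many roots cluster near $\lambda_1$, whereas the target $p_k$ is only $O(n/(x-\lambda_1)^k)$. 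The Fa\`a di Bruno polynomial (with Bell-number-many terms of alternating sign) must cancel this down by a factor of order $n^{k-1}$ and then some, costing $\Theta(\log^2 n)$ extra bits for $k = \log n$. This may still fit inside the $\bigo{\log n\log(n\gamma/\e)}$ budget, but you do not quantify it — and it is exactly the analysis you flag as ``the main obstacle'' and then leave as a sketch. This precision argument \emph{is} the substance of the theorem; without it the query and bit bounds are unproved.

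The paper sidesteps the cancellation entirely via the identity $\mom_1^{(k-1)} = (-1)^{k-1}(k-1)!\,\mom_k$ from \prettyref{lem:momprime}: approximate $\mom_1 = \charp'/\charp$ pointwise by a single forward difference of $\charp$, then apply a $(k-1)$-fold forward difference to $\mom_1$ to get $\mom_k$ directly, with no intermediate $\charp^{(j)}$ and no Fa\`a di Bruno. The error bound then reduces to a Taylor-remainder estimate (\prettyref{lem:approx-der}). You could repair your route with far less work by applying the same $k$-fold forward difference directly to $\log\charp$ — since $\frac{d^k}{dx^k}\log\charp$ is a constant times $p_k$, this avoids both the Vandermonde and the Fa\`a di Bruno unrolling — but as written the hardest part of the argument is missing.
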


Computing the determinant of an integer matrix has asymptotic bit complexity 
$\bigo{n^{\omega} \log^2 n \log (\fnorm{A})}$ for any integer matrix $A$ \cite{s05}. 
Using this determinant algorithm as a black box,
we get the following result for computing the eigenvalues of matrices.
\begin{theorem}
\label{thm:main}
Given a symmetric matrix $A \in \Q^{n \times n}$, and a
parameter $\e \in (0,1/2]$, there exists a Las Vegas algorithm having bit complexity 
$\tbigo{n^{\omega} \log^2 \paren{\fnorm{A}/\e}}$ 
that outputs an $x \in \Q$ satisfying
$\lambda_1 \leq x \le \lambda_1 + \e$,
where $\lambda_1$ is the largest eigenvalue of $A$. 
\end{theorem}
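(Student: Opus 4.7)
The plan is to reduce to Theorem~\ref{thm:poly-blackbox} applied to the characteristic polynomial $\charp(x) \defeq \det(xI - A)$. Since $A$ is symmetric, $\charp$ is a monic polynomial of degree $n$ with real roots, and its largest root is $\lambda_1$. As an upper bound on $|\lambda_i|$ we take $\gamma \defeq \fnorm{A}$, which is valid because $|\lambda_1| \le \norm{A}_2 \le \fnorm{A}$. Applying Theorem~\ref{thm:poly-blackbox} then yields an output $x$ with $\lambda_1 \le x \le \lambda_1 + \e$ after $Q \defeq \bigo{\log n \log(\fnorm{A}/\e)}$ black-box queries, each at a rational point of precision $D \defeq \bigo{\log n \log(n\fnorm{A}/\e)}$ bits.

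The remaining task is to implement each query of $\charp(x)$ within the stated bit budget. First, by clearing a common denominator of the entries of $A$ we may assume $A \in \Z^{n\times n}$ (this only affects $\fnorm{A}$ by a polynomial factor in the input size, which is absorbed into $\tbigO$). For a rational query point $x = p/q$ with $\bigo{D}$-bit numerator and denominator, write
\[
\charp(p/q) \;=\; \det\!\paren{(p/q) I - A} \;=\; q^{-n}\det\!\paren{pI - qA}\mper
\]
Thus each evaluation reduces to one integer determinant computation on the matrix $pI - qA \in \Z^{n \times n}$, whose Frobenius norm is at most $2^{\bigo{D}} \fnorm{A}$, followed by a single exact rational division by $q^n$. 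Invoking Storjohann's Las Vegas algorithm~\cite{s05}, which computes $\det(M)$ for $M \in \Z^{n\times n}$ in $\tbigo{n^{\omega} \log \fnorm{M}}$ bit operations, the cost per query is $\tbigo{n^{\omega} \cdot D} = \tbigo{n^{\omega} \log(\fnorm{A}/\e)}$.

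Multiplying the cost per query by the number of queries gives total bit complexity
\[
Q \cdot \tbigo{n^{\omega} \log(\fnorm{A}/\e)} \;=\; \tbigo{n^{\omega} \log^2(\fnorm{A}/\e)}\mcom
\]
where the $\log n$ and $\polylog$ factors from Storjohann's algorithm are folded into $\tbigO$. The algorithm inherits the Las Vegas property from the determinant subroutine; correctness of each query is verified by the subroutine itself, so the Newton-style outer iteration (which is deterministic by Theorem~\ref{thm:poly-blackbox}) returns the correct answer whenever the determinant calls succeed.

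The only delicate point is bookkeeping: we must confirm that the rounded determinant values still lie within the error tolerance that the outer algorithm in Theorem~\ref{thm:poly-blackbox} is designed to handle, i.e.\ that evaluating $\charp$ to $\bigo{D}$ bits of precision at each query point is sufficient for the guarantees claimed there. Since Theorem~\ref{thm:poly-blackbox} is stated as requiring exactly this precision at each of its query points, the substitution is immediate. The rest is accounting, which is why I expect this proof to be short given the main theorem in hand.
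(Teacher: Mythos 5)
Your proposal is correct and takes essentially the same route as the paper: the paper proves \prettyref{thm:main} by plugging Storjohann's determinant algorithm (\prettyref{thm:fast-determinant}) into \prettyref{thm:analysis} with $k = \lceil\log n\rceil$, which is exactly what \prettyref{thm:poly-blackbox} packages up. Your rational-to-integer reduction $\charp(p/q) = q^{-n}\det(pI - qA)$ and the norm bound on $pI-qA$ make explicit a step the paper's proof leaves implicit, and your final accounting matches the paper's $\tbigo{n^{\omega}\log^2(\fnorm{A}/\e)}$; the only small inaccuracy is that the ``delicate bookkeeping'' you flag is vacuous, since Storjohann's algorithm returns the exact integer determinant, so $\charp(p/q)$ is computed exactly rather than rounded.
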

A closely related problem is that of determining whether a given matrix is PSD. This 
problem arises naturally in the context of solving SDPs. \prettyref{thm:main} yields
an algorithm to check if a matrix is PSD.
\begin{corollary}[Corollary to \prettyref{thm:main}]
Given a symmetric matrix $A \in \Q^{n \times n}$, and a
parameter $\e \in (0,1/2]$, there exists a Las Vegas algorithm having bit complexity 
$\tbigo{n^{\omega} \log^2 \paren{\fnorm{A}/\e}}$ 
to check if $A \succeq -\e I$.
\end{corollary}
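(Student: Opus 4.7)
The plan is to reduce the approximate PSD test to a top-eigenvalue computation on $-A$ and invoke \prettyref{thm:main}. The key observation is that
\[ A \succeq -\e I \iff \lambda_{\min}(A) \geq -\e \iff \lambda_1(-A) \leq \e, \]
so it suffices to approximate $\lambda_1(-A)$ sharply enough to test it against a threshold near $\e$.

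First I would apply \prettyref{thm:main} to the symmetric matrix $-A$ with error parameter $\e/2$. Since $\fnorm{-A} = \fnorm{A}$, the resulting bit complexity is $\tbigo{n^\omega \log^2(\fnorm{A}/\e)}$, matching the claim. This returns a rational $x$ satisfying $\lambda_1(-A) \leq x \leq \lambda_1(-A) + \e/2$. The algorithm then outputs \yes (declaring $A$ approximately PSD) if $x \leq \e/2$, and \no otherwise.

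Correctness follows immediately from the two ends of the interval for $x$. If the procedure outputs \yes, then $\lambda_1(-A) \leq x \leq \e/2 < \e$, so $A \succeq -\e I$ actually holds. Conversely, if $A$ is genuinely PSD, i.e.\ $\lambda_1(-A) \leq 0$, then $x \leq \e/2$, so the procedure outputs \yes. Thus the algorithm cleanly distinguishes $A \succeq 0$ from $A \not\succeq -\e I$, which is the standard ``approximate PSD'' semantics hinted at in the abstract.

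There is essentially no substantive obstacle here: the only point to get right is the working precision for the reduction, since using error $\e$ would leave the decision ambiguous exactly at the boundary $\lambda_1(-A) \approx \e$, whereas running with error $\e/2$ and thresholding at $\e/2$ keeps the decision robust on both sides. The additional sign flip and single rational comparison are negligible and do not affect the asymptotic bit complexity, so the corollary follows.
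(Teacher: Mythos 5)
Your proof is correct and uses exactly the natural reduction one would expect: apply \prettyref{thm:main} to $-A$ (noting $\fnorm{-A}=\fnorm{A}$, so the bit complexity is unchanged), and threshold the returned estimate of $\lambda_1(-A)$. The paper offers no explicit proof of this corollary, so there is nothing substantive to compare against; your writeup simply makes the implicit reduction precise. You also correctly identify and resolve the one subtlety: since the estimate of $\lambda_1(-A)$ is only one-sided-additive, an exact decision of ``$\lambda_1(-A)\leq\e$'' is not possible, and the right reading is a promise problem (distinguish $A\succeq 0$ from $A\not\succeq -\e I$, consistent with the abstract's ``approximately PSD''). Your choice to run with error $\e/2$ and threshold at $\e/2$ is one clean way to do this; running with error $\e$ and thresholding at $\e$ would work equally well for the symmetric promise (distinguish $A\succeq -\e I$ from $A\not\succeq 0$) — either is a legitimate formalization of what the corollary asserts.
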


\subsection{Related work}
A folklore result about the Newton's iteration is that it has quadratic local convergence, i.e.,
 if the initial estimate $x^0$ is ``close'' to a root $a$ of the function $f$, 
then $(x^{t+1} - a)$ is roughly $\bigo{\paren{x^t - a}^2}$.  
Kou \etal \cite{klw06} gave a modification to the Newton's method that has local cubic convergence.
Gerlach \cite{g94} (see also \cite{fp96,kkz97,kg00}) gave a way to modify the function $f$ 
to obtain a function $F_m$ (where $m \in \Z^+$ is a parameter) such that 
the Newton's Method applied to $F_m$ will yield local convergence of order $m$
(the complexity of the computation of $F_m$ increases with $m$).
Ezquerro and Hern\'{a}ndez \cite{eh99}, and 
Guti{\'e}rrez and Hern\'{a}ndez \cite{gh01} gave an acceleration of the Newton's method 
based on the convexity of the function.
Many other modifications of the Newton's method have been explored in the literature,
for e.g. see \cite{ow08,lr08}, etc. None of these improve the asymptotic worst-case complexity of root-finding.

\paragraph{Explicit polynomials.} A related problem is to compute the roots of an explicit polynomial of degree $n$, say $p(x)$, to within error $\e$.
Pan \cite{p96} gave an algorithm to compute all the roots of an explicit polynomial
using $\tbigo{n}$ arithmetic operations;
the bit complexity of this algorithm is bounded by $\tbigo{n^3}$.  
We refer the reader to a survey by Pan \cite{p97} for a comprehensive discussion on 
algorithmic results for this problem.
We note that this model is different from the blackbox model that we study;
in the blackbox model of a polynomial $p$, the algorithm can query an oracle for the
value of $p(x)$ for any $x \in \R$, but the algorithm is not allowed access to the coefficients of $p(x)$. 

\paragraph{Other Iterations.}
The most popular algorithm for computing the top eigenvalue of a matrix is the 
{\em Power Iteration} algorithm \cite{mp29}, where for a symmetric matrix $A \in \Q^{n \times n}$,
we start with a random vector $X^0 \in \Q^n$ and recursively define $X^t$ as
\[ X^{t+1} \defeq \frac{ A X^t }{ \norm{A X^t}  } \qquad \forall t \in \Z \mper  \] 
It is easy to show that $X_t$ converges to the eigenvector corresponding the $\lambda_1$, the largest 
eigenvalue of $A$, and after $t = (\log n)/\e$ iterations, the Raleigh 
quotient\footnote{The Rayleigh quotient of a vector $X$ w.r.t. a matrix $A$ is defined as 
$ \ralsymb_A(X) \defeq (X^T A X) / (X^T X) \mper $ } 
of $X_t$  is an $\e$-approximation to $\lambda_1$. Therefore, this gives a running time
bound of $\bigo{n^2 (\log n)/\e}$, and this bound is tight when the matrix has its top few 
eigenvalues close to each other. 
Other methods such as the the Jacobi Method \cite{r71}, the Householder method \cite{h70}, etc. 
have worst case running time $\bigo{n^3}$, whereas methods such as 
{\em Lanczos} algorithm \cite{l50}, the {\em Arnoldi iteration} \cite{a51}, etc.
have a polynomial dependance on $1/\e$ in the running time.
We refer the reader to \cite{ptvf92} for a comprehensive discussion.

\paragraph{Matrix Eigenvalues.}
An algorithm for computing the largest eigenvalue of a matrix 
can be obtained by checking {\em PSDness} of a sequence of matrices, namely, a binary search 
for $x$ s.t. $xI-A$ is PSD. Checking whether a matrix is PSD can be done using Gaussian elimination 
in $\tbigo{n^4}$ bit operations \cite{e67}.

Algorithms due to \cite{pc99} (see also \cite{nh13}) compute all the eigenvalues of a matrix
in $\tbigo{n^3}$ arithmetic operations. \cite{ddh07} gave an algorithm to compute the eigenvalues
in $\tbigo{n^{\omega}}$ arithmetic operations.
Independently and concurrently, Ben-Or and Eldar \cite{be15} gave an algorithm
having boolean complexity $\tbigo{n^{\omega + \nu}}$ for any $\nu >0$,  
to compute all the eigenvalues of a matrix.
Faster methods are known for special matrices 
such as diagonally dominant matrices (of which Laplacians are an important special case), but their dependence 
on $1/\e$ is again polynomial \cite{v13}.

\subsection{Preliminaries}
\label{sec:simpleass}
\begin{assumption}
\label{ass:root1}
Given a real rooted polynomial $f$ of degree $n$ and an upper bound $a$ on the absolute value of its roots, 
the roots of the polynomial $f(4ax)/(4a)^n$ lie in $[-1/4,1/4]$
and the roots of the polynomial $f(4ax -1/4 )/(4a)^n$ lie in $[0,1/2]$. 
Therefore, we can assume without loss of generality that the given polynomial has all its roots in the 
range $[0,1/2]$.
Similarly, for a symmetric matrix $A$, $0 \preceq I/4 + A/(4 \fnorm{A}) \preceq I/2$. 
Note that in both these cases, we will need to scale the error parameter $\e$ accordingly;
since our algorithms will only have a logarithmic dependance on $1/\e$, this scaling
will not be a problem.
\end{assumption}

\paragraph{Notation.}
We will use $\charp_A(x)$ to denote the characteristic polynomial of a matrix $A$;
we will drop the subscript $A$ whenever the matrix is clear from the context.
For an $x \in \R$, we use $\bitcomp{x}$ to denote the bit complexity of $x$, i.e., the number of 
bits need to represent $x$. 
For a function $g$, we use $\tbigo{g}$ to denote $\bigo{g \log^c g}$ for absolute constants $c$.
For a function $g$, we use $g^{(k)}(x)$ to denote its $k^{th}$ derivative w.r.t. $x$. 

\section{The basic Newton iteration}
\label{sec:Newton}
For finding the root of a polynomial function $\charp(\cdot) : \R \to \R$, the basic Newton iteration is the following:
initialize $x^0 = 1$, 
and then
\[ x^{t+1} \defeq x^t - \frac{\charp(x^t)}{\charp'(x^t)} \mper \]
If $x^0 \geq \lambda_1$, then this iteration maintains $x^t \geq \lambda_1\ \forall t$ and 
reduces $x^t-\lambda_1$ by a factor of at least $\paren{1-\frac{1}{n}}$ from the following observation.
\begin{proposition}
For any $t$, the Newton iterate $x^t$ satisfies $x^t  \geq \lambda_1$ and 
\[ x^t -\lambda_1 \geq \frac{\charp(x^t)}{\charp'(x^t)} \geq \frac{x^t - \lambda_1}{n} \]
\end{proposition}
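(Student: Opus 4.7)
The plan is to prove both the lower bound $x^t \geq \lambda_1$ and the sandwich inequality simultaneously by induction on $t$, using the logarithmic derivative identity
\[ \frac{\charp'(x)}{\charp(x)} \;=\; \sum_{i=1}^n \frac{1}{x-\lambda_i}, \]
which holds whenever $x$ is not a root of $\charp$, because $\charp(x)=\prod_{i=1}^n(x-\lambda_i)$ is monic real-rooted of degree $n$.

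For the base case, by \prettyref{ass:root1} we may assume $\lambda_1 \leq 1/2$, so $x^0=1 \geq \lambda_1$ trivially. For the inductive step, assume $x^t \geq \lambda_1$. The case $x^t=\lambda_1$ is degenerate (the iteration has already reached the root, so the statement holds vacuously or the quotient equals $0$), so suppose $x^t > \lambda_1$. Then every summand $1/(x^t-\lambda_i)$ is strictly positive since $x^t > \lambda_1 \geq \lambda_i$ for each $i$. The largest summand is $1/(x^t-\lambda_1)$, which yields the two-sided bound
\[ \frac{1}{x^t-\lambda_1} \;\leq\; \sum_{i=1}^n \frac{1}{x^t-\lambda_i} \;\leq\; \frac{n}{x^t-\lambda_1}. \]
Taking reciprocals flips the inequalities, giving exactly
\[ \frac{x^t-\lambda_1}{n} \;\leq\; \frac{\charp(x^t)}{\charp'(x^t)} \;\leq\; x^t-\lambda_1, \]
which is the claimed sandwich. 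Finally, the upper bound $\charp(x^t)/\charp'(x^t) \leq x^t-\lambda_1$ rearranges to $x^{t+1} = x^t - \charp(x^t)/\charp'(x^t) \geq \lambda_1$, closing the induction.

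There is no real obstacle here; the only subtlety worth noting is that positivity of each term $1/(x^t-\lambda_i)$ relies on having $x^t$ dominate \emph{every} root of $\charp$, not just $\lambda_1$, which is automatic since $\lambda_1$ is the largest. The induction therefore propagates the invariant $x^t \geq \lambda_1$ together with the two-sided control on the Newton step in a single pass.
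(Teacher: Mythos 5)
Your proof is correct and uses the same key identity as the paper, namely $\charp'(x)/\charp(x)=\sum_i 1/(x-\lambda_i)$ together with bounding the sum between its largest term and $n$ times that term. The only difference is that you spell out the induction establishing $x^t\geq\lambda_1$ (base case from \prettyref{ass:root1}, inductive step from the sandwich inequality), which the paper leaves implicit; this is a welcome bit of extra rigor but not a different argument.
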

\begin{proof}
Since $\charp(x) = \Pi_{i \in [n]}(x - \lambda_i)$ we have
\[ \frac{\charp(x)}{\charp'(x)}  = \frac{1}{ \sum_{i \in [n]}  \frac{1}{x - \lambda_i}} \qquad \textrm{and } \qquad
x - \lambda_1 \geq \frac{1}{ \sum_{i \in [n]}  \frac{1}{x - \lambda_i}} \geq \frac{x^t - \lambda_1}{n} \mper   \]
\end{proof}
Along with the next elementary lemma, we get a bound of $\bigo{n \log \paren{1/\e}}$ on the number of iterations
needed for $x^t$ to be $\e$ close to $\lambda_1$. 
\begin{lemma}
\label{lem:noofits1}
Let $x^0, x^1, \ldots$ be iterates satisfying $x^0 \geq \lambda_1$ and  
\[  x^{t + 1} \leq  x^t -  \frac{x^t - \lambda_1 }{q(n) }. \]
Then for all $t\geq q(n) \ln (1/\e)$, we have 
$  0 \leq x^t - \lambda_1 \leq \e \mper   $
\end{lemma}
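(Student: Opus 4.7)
The plan is to recast the recursion as a simple multiplicative contraction on the error sequence. Let $y^t \defeq x^t - \lambda_1$. Subtracting $\lambda_1$ from both sides of the hypothesis gives
\[ y^{t+1} \;\leq\; y^t \paren{1 - \frac{1}{q(n)}} \mper \]
Iterating this bound from $t=0$, and using $y^0 \geq 0$ from the assumption $x^0 \geq \lambda_1$, yields
\[ y^t \;\leq\; y^0 \paren{1 - \frac{1}{q(n)}}^t \;\leq\; y^0 \, e^{-t/q(n)} \mcom \]
where the last inequality is the standard estimate $1 - z \leq e^{-z}$.

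Next I would normalize the initial error. Under \prettyref{ass:root1} the roots lie in $[0,1/2]$, and the Newton iteration is initialized with $x^0 = 1$ as stated at the start of \Sref{sec:Newton}, so $y^0 \leq 1$. Substituting $t \geq q(n) \ln(1/\e)$ then gives $y^t \leq e^{-\ln(1/\e)} = \e$, which is the desired upper bound.

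The only non-routine point is the lower bound $x^t \geq \lambda_1$, since the hypothesis of the lemma is a one-sided inequality and, on its own, does not prevent an iterate from dipping below $\lambda_1$. I would handle this by appealing to the preceding proposition, which shows that the true Newton step satisfies $0 \leq \charp(x^t)/\charp'(x^t) \leq x^t - \lambda_1$ whenever $x^t \geq \lambda_1$, so inductively $x^{t+1} \geq \lambda_1$ and the invariant $y^t \geq 0$ is preserved throughout. (In the later accelerated iteration of \prettyref{sec:algo}, the analogous invariant will need to be re-established for the new update rule; for the basic Newton case treated here, it is immediate.) Combining this invariant with the exponential decay derived above completes the proof.
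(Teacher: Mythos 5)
Your proof takes the same core route as the paper's: rewrite the hypothesis as a multiplicative contraction on the error $y^t = x^t - \lambda_1$, iterate to get $y^t \leq y^0(1-\nfrac{1}{q(n)})^t$, bound $y^0 \leq 1$ using the normalization of \prettyref{ass:root1} and the initialization $x^0 = 1$, and then apply $1-z \leq e^{-z}$. The paper's own proof is exactly this, stated slightly more tersely.

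Where you go beyond the paper is in noticing that the lower bound $x^t \geq \lambda_1$ is \emph{not} a consequence of the stated hypothesis: the recursion only bounds $x^{t+1}$ from above, so nothing in the lemma's premises stops an iterate from overshooting below $\lambda_1$ (e.g.\ $q(n)=1$ forces $x^{t+1}\le\lambda_1$ and then the contraction argument, which divides by $x^t-\lambda_1$, breaks down). The paper's proof tacitly uses $x^t - \lambda_1 \geq 0$ both when dividing to form the ratio $\nfrac{(x^{t+1}-\lambda_1)}{(x^t-\lambda_1)}$ and when asserting the lower bound in the conclusion. Your resolution --- invoke the preceding proposition to preserve the invariant for basic Newton, and note that the accelerated iteration re-establishes $x^{t+1}-\lambda_1 \geq \eprime$ separately in the proof of \prettyref{thm:analysis} --- is exactly how the two applications of the lemma actually keep the invariant alive. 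The cleanest fix would be to add ``$x^t \geq \lambda_1$ for all $t$'' as an explicit hypothesis of the lemma, since it is used in two different settings; but your reading of the intent is correct and your argument is sound.
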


\begin{proof}
Suppose the condition is satisfied. Then, 
\[   \frac{ x^{t+1} - \lambda_1  }{ x^t - \lambda_1 } \leq 1 - \frac{1}{ q(n)} \mper  \]
Therefore,
\[  \paren{x^t - \lambda_1} \leq   \paren{1 - \frac{1}{ q(n)}}^t \paren{x^0 - \lambda_1} 
\leq \paren{1 - \frac{1}{q(n)}}^t  \mper \]
Hence, for all $t \geq q(n) \log \paren{1/\e}$, we have  
$ 0 \leq  x^t - \lambda_1 \leq \e \mper  $
\end{proof}

This leaves the task of computing $\charp'(x)$. We can simply use the 
approximation $(\charp(x+\delta)-\charp(x))/\delta$ for a suitably small
$\delta$. Thus the modified iteration which only needs evaluation of $\charp$ (i.e., determinant computations
when $\charp$ is the characteristic polynomial of a matrix), 
is the following: initialize $x^0 = 1$, and then 
\[ x^{t+1} \defeq x^t - \frac{\delta}{2}\frac{\charp(x^t)}{\charp(x^t+\delta)-\charp(x^t)}  \]
with $\delta = \e^2$.
 
When $\charp(\cdot)$ is the characteristic polynomial of a matrix $A$, 
evaluation of $\charp(x)$ reduces to computing $\det(A - xI)$ which can be done 
using \prettyref{thm:fast-determinant}. This gives an 
overall bit complexity of $\tbigo{n^{\omega+1}}$ for computing the top eigenvalue.

\section{Accelerating the Newton iteration}
\label{sec:algo}

To see the main idea, consider the following family of functions.
For any $k \in \Z$, define 
\[ \mom_k(x) \defeq \sum_{i \in [n]}  \frac{1}{ \paren{x - \lambda_i}^k  } \mper \]
We define the $k$'th order iteration to be 
\begin{mybox}
\begin{equation}
\label{eq:kth-order}
x^{t+1} \defeq x^t - \frac{1}{n^{1/k}}\frac{\mom_{k-1}(x^t)}{\mom_k(x^t)}
\end{equation}
\end{mybox}

Note that $\mom_1(x) = \charp'(x)/\charp(x)$ and for $k=1$ we get the Newton iteration, as $\mom_0(x) = n$.
Viewing the $\mom_k(x)$ as the $k$'th moment of the vector 
$\paren{\frac{1}{x-\lambda_1}, \frac{1}{x-\lambda_2}, \ldots, \frac{1}{x-\lambda_n}}$, we can use the following basic norm inequality.

\begin{lemma}
\label{lem:knorm}
For any vector $X \in \R^n$,
\[  \frac{1}{n^{1/k}} \norm{X}_{k-1}^{k-1} \normi{X}   \leq  \norm{X}_k^k \leq \norm{X}_{k-1}^{k-1} \normi{X}   \mper  \]
\end{lemma}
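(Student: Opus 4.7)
The plan is to treat both bounds as straightforward consequences of summing $|X_i|^{k-1}\cdot|X_i|$ and then applying standard norm comparisons.

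For the upper bound, I would simply factor:
$$ \|X\|_k^k \;=\; \sum_{i} |X_i|^{k-1}\cdot |X_i| \;\le\; \|X\|_\infty \sum_{i} |X_i|^{k-1} \;=\; \|X\|_{k-1}^{k-1}\,\|X\|_\infty. $$
No obstacle here.

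For the lower bound, the key step is controlling $\|X\|_{k-1}^{k-1}$ by $\|X\|_k^{k-1}$. I would apply H\"older's inequality to the product $|X_i|^{k-1}\cdot 1$ with conjugate exponents $k/(k-1)$ and $k$:
$$ \|X\|_{k-1}^{k-1} \;=\; \sum_{i} |X_i|^{k-1}\cdot 1 \;\le\; \Bigl(\sum_{i}|X_i|^{k}\Bigr)^{(k-1)/k}\cdot n^{1/k} \;=\; n^{1/k}\,\|X\|_k^{k-1}. $$
Equivalently, this is the standard monotonicity inequality $\|X\|_{k-1}\le n^{1/(k(k-1))}\|X\|_k$ raised to the $(k-1)$-st power. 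Multiplying both sides by $\|X\|_\infty$ and then using the trivial $\|X\|_\infty\le\|X\|_k$ yields
$$ \|X\|_{k-1}^{k-1}\,\|X\|_\infty \;\le\; n^{1/k}\,\|X\|_k^{k-1}\,\|X\|_\infty \;\le\; n^{1/k}\,\|X\|_k^{k}, $$
which rearranges to the claimed lower bound $\frac{1}{n^{1/k}}\|X\|_{k-1}^{k-1}\|X\|_\infty \le \|X\|_k^k$.

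There is no real obstacle: both inequalities reduce to elementary norm comparisons, and the factor $n^{1/k}$ on the left-hand side is exactly what H\"older produces when pairing $|X_i|^{k-1}$ against the all-ones vector. The only thing to be careful about is the edge case $k=1$, where $\|X\|_0^0$ would need to be interpreted as $n$ (matching $\mom_0(x)=n$ in the paper's notation); in that case both inequalities collapse to $\|X\|_\infty\le\|X\|_1\le n\|X\|_\infty$, which holds trivially.
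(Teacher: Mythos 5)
Your proof is correct and follows essentially the same route as the paper's: the upper bound is the direct factorization $\sum_i |X_i|^k \le \|X\|_\infty\sum_i |X_i|^{k-1}$, and the lower bound combines the H\"older step $\|X\|_{k-1}^{k-1}\le n^{1/k}\|X\|_k^{k-1}$ with the monotonicity $\|X\|_\infty\le\|X\|_k$. Your extra note on the $k=1$ interpretation (matching $\mom_0(x)=n$) is a small helpful addition but otherwise the argument matches the paper.
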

\begin{proof}
Using Holder's Inequality, we get 
\begin{equation}
\label{eq:helper1}
\norm{X}_{k-1}^{k-1}  = \sum_{i} X(i)^{k-1} \leq \paren{\sum_i X(i)^k}^{\frac{k-1}{k}} n^{  \frac{1}{k}} 
	= \norm{X}_k^{k-1} n^{\frac{1}{k} }
\mper 
\end{equation}
By the monotonicity of norms, we have
$\normi{X} \leq \norm{X}_k$. 
Therefore,
\[ \frac{1}{n^{1/k}} \norm{X}_{k-1}^{k-1} \normi{X}   \leq  \norm{X}_k^k \mper    \]
Next,
\[ \norm{X}_k^k = \sum_i X(i)^k \leq \normi{X} \sum_{i} X(i)^{k-1} =   \norm{X}_{k-1}^{k-1} \normi{X}   \mper \]
\end{proof}

The lemma implies that the distance to $\lambda_1$ shrinks by a factor of $(1-1/n^{1/k})$ in 
each iteration, thereby needing only $\tbigo{n^{1/k}}$ iterations in total. 

This brings us to question of how to implement the iteration, i.e., how to 
compute $\mom_k(x)$? We first note that these can be rewritten in terms of higher
 derivatives of the polynomial. Let $\mom_k^{(i)}(x)$ be the $i$'th derivative of $\mom_k(x)$.
\begin{lemma}
\label{lem:momprime}
For any $k \in \Z$,
\[ \mom_k'(x) = - k\, \mom_{k+1}(x)  \mper  \]
\[ \mom_k^{(i)}(x) = (-1)^i \mom_{k+i}(x) \prod_{j=0}^{i-1} (k+j) \mper \] 
\end{lemma}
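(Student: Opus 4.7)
The plan is to prove the first identity by a direct termwise differentiation, then deduce the general $i$-th derivative formula by a straightforward induction on $i$.

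For the first identity, I would simply write $\mom_k(x) = \sum_{i \in [n]} (x-\lambda_i)^{-k}$ and differentiate each summand using the power rule: $\frac{d}{dx}(x-\lambda_i)^{-k} = -k(x-\lambda_i)^{-k-1}$. Summing over $i \in [n]$ gives $\mom_k'(x) = -k \sum_{i \in [n]} (x-\lambda_i)^{-(k+1)} = -k\, \mom_{k+1}(x)$. This is valid for any $k \in \Z$ and any $x$ outside the set of roots $\{\lambda_1, \ldots, \lambda_n\}$, which is the regime in which we apply it.

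For the second identity, I would proceed by induction on $i$. The base case $i=0$ is trivial, as the empty product equals $1$ and $\mom_k^{(0)}(x) = \mom_k(x)$. For the inductive step, assume the formula holds for some $i \geq 0$:
\[ \mom_k^{(i)}(x) = (-1)^i \mom_{k+i}(x) \prod_{j=0}^{i-1} (k+j) \mper \]
Differentiating both sides with respect to $x$ and applying the first identity (with $k$ replaced by $k+i$) gives
\[ \mom_k^{(i+1)}(x) = (-1)^i \bigl(-(k+i)\bigr) \mom_{k+i+1}(x) \prod_{j=0}^{i-1}(k+j) = (-1)^{i+1}\, \mom_{k+(i+1)}(x) \prod_{j=0}^{i}(k+j) \mper \]
This is exactly the claimed formula with $i$ replaced by $i+1$, completing the induction.

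There is no genuine obstacle here: the lemma is essentially a mechanical consequence of the power rule applied to the explicit sum-of-reciprocals formula defining $\mom_k$, together with a one-line induction. The only mild care needed is to note that the formula is to be interpreted at points $x$ where none of the $x-\lambda_i$ vanish, which is fine since the iteration always maintains $x^t > \lambda_1$.
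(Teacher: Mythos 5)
Your proof is correct and matches the paper's approach: termwise differentiation via the power rule for the first identity, and induction (which the paper only sketches as ``the second part is similar'') for the general $i$-th derivative. You simply spell out the induction more explicitly than the paper does.
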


\begin{proof}
\[ \mom_k'(x) = \frac{\diff}{\dx} \paren{  \sum_{i=1}^n \frac{1}{ (x-\lambda_i)^k }} = 
\sum_{i=1}^n   -k \cdot \frac{1}{ (x-\lambda_i)^{k+1} }  = -k\, \mom_{k+1} (x) \mper  \]
The second part is similar.
\end{proof}

Therefore the iteration \prettyref{eq:kth-order} is simply a ratio of higher derivatives of the polynomial. 
In the complete algorithm below (\prettyref{fig:accnewton}), which only needs evaluations of $\charp(\cdot)$, 
we approximate $\mom_l(x)$ using finite differences. The folklore finite difference method says that for any
function $f: \R \to \R$, its $k^{th}$ derivative can be estimated using  
\[ \frac{1}{\delta^{k}} \paren{ \sum_{i = 0}^{k} (-1)^i \binom{k}{i}  f \paren{x + (k-i)\delta }}  \]
for small enough $\delta$.
We prove this rigorously in our setting in \prettyref{lem:approx-der}.

\begin{figure}[ht]
\begin{tabularx}{\columnwidth}{|X|}
\hline
\begin{algorithm}[Higher-order Newton Iteration]
\label{alg:faster-newton}~

{\bf Input:}  
A real rooted monic polynomial $\charp$ of degree $n$ such that all its roots lie in $[0,1/2]$ (\prettyref{ass:root1}),
error parameter $\e \in (0,1/2]$, iteration depth $k$. 

{\bf Output:} A real number $\lambda$ satisfying $ 0 \leq \lambda - \lambda_1 \leq \e$, 
	where $\lambda_1$ is the largest root of $\charp$.

\begin{enumerate}
\item 
\label{step:deltainit}
Initialize $x^0 = 1$, 
\[ \eprime \defeq \frac{\e}{8 n^{1/k}}, \qquad \qquad \delta \defeq \frac{\eprime}{16 (2e)^k k}, \qquad \qquad
	\delta' \defeq \delta^{k+1},  \qquad \textrm{and} \qquad 
	\alpha \defeq \frac{\delta' \eprime^2}{2n^2} \mper \]

\item Repeat for $t = 1$ to $ \ceil{ 16 n^{1/k} \log(1/\e) }$ iterations:
	\begin{enumerate}
		\item Compute $\tg_k(x^t)$ as follows.
		\[ \tg_1(x) \defeq  \frac{1}{\charp(x)} \paren{\frac{\charp(x+\alpha) - \charp(x) }{\alpha}} 
		 \]
		and
		\begin{equation}
		\label{eq:tgdef} 
		\tg_{k+1}(x) \defeq \frac{(-1)^k}{k!} \paren{ \frac{1}{\delta^{k}} 
			\paren{ \sum_{i = 0}^{k} (-1)^i \binom{k}{i}  \tg_1 \paren{x + (k-i)\delta }} } \mper 
		\end{equation}
		\item 
		\label{step:faster-newton-update}
		Compute the update
		\[ \ut \defeq \frac{1}{4n^{1/k}} \frac{ \tg_{k-1}(x^t) }{ \tg_k(x^t) } \mper  \]
				
		\item  \label{step:stop}
		If  $\ut \leq \eprime$, then {\sf Stop} and output $x^t$.
		\item \label{step:update} 
		If $\ut > \eprime$, then round down $\ut$ to an accuracy of $\eprime/n$ to get $\tut$ and
		set $ x^{t+1} \defeq x^t -  \tut$.
	\end{enumerate}
\item Output $x^{t}$. \label{step:endofalg}
\end{enumerate}

\end{algorithm}
\\
\hline 
\end{tabularx}
\caption{The Accelerated Newton Algorithm}
\label{fig:accnewton}
\end{figure}

\paragraph{Discussion.}
While it is desirable for $x^t$ to be very close to $\lambda_1$, 
for $\tg_k(x^t)$ to be a good approximation of $\mom_k(x^t)$, we need $\alpha$ and $\delta$ to be
sufficiently smaller than $x^t - \lambda_1$. Equivalently, we need a way to detect when  
$x^t$ gets ``very close'' to $\lambda_1$; \prettyref{step:stop} does this for us 
(\prettyref{lem:noofits2}). 
We also want to keep the bit complexity of $x^t$ bounded; 
\prettyref{step:update} ensures this by retaining only a small number of the most significant bits
of $\ut$.

The analysis of the algorithm can be summarised as follows. 
\begin{theorem}
\label{thm:analysis}
Given a monic real rooted polynomial $\charp : \R \to \R$ of degree $n$, having all its roots in
$[0,1/2]$, \prettyref{alg:faster-newton}
outputs a $\lambda$ satisfying
\[ 0 \leq \lambda - \lambda_1 \leq \e  \]
while evaluating $\charp$ at $\bigo{k n^{1/k} \log (1/\e)}$ locations on $\R$.
Moreover, given access to a blackbox subroutine to evaluate $\charp(x)$ which runs in 
time \footnote{We assume that $\ftime{cn} = \bigo{\ftime{n}}$ for absolute constants $c$, and that
 $\ftime{n_1} \leq {\ftime{n_2}}$ if $n_1 \leq n_2$.}
$\ftime{\bitcomp{x}}$ ,
\prettyref{alg:faster-newton} has overall time complexity
$\tbigo{k n^{1/k} \log (1/\e) \ftime{k^2 + k\log (n/\e) }}$.
\end{theorem}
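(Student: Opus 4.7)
Correctness rests on three ingredients, proved in sequence: (i)~the approximation $\tg_k(x^t)$ is a good multiplicative estimate of $\mom_k(x^t)$ throughout the loop; (ii)~this yields the invariant $x^{t+1}\ge\lambda_1$ together with the per-iteration contraction $x^{t+1}-\lambda_1\le(1-\Omega(1/n^{1/k}))(x^t-\lambda_1)$; (iii)~when \prettyref{step:stop} fires, or the loop exhausts, the returned value is within $\e$ of $\lambda_1$. Complexity follows by bounding the bit precision of each queried point and the number of evaluations of $\charp$ per iteration.

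\textbf{Approximating $\mom_k$.} The formula defining $\tg_{k+1}$ in \prettyref{eq:tgdef} is the standard divided-difference approximation to the $k$-th derivative of $\tg_1$; combined with \prettyref{lem:momprime}, which gives $\mom_1^{(k)}(x)=(-1)^k k!\,\mom_{k+1}(x)$, a Taylor expansion shows that using the \emph{exact} $\mom_1$ in place of $\tg_1$ would yield $\mom_{k+1}(x)+\bigo{\delta\cdot\mom_{k+2}(x)}$. Separately, $\tg_1(x)-\mom_1(x)=\bigo{\alpha\cdot\Abs{\charp''(x)/\charp(x)}}$ by Taylor expansion of $\charp$. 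Passing these through the divided difference amplifies the $\tg_1$-error by $1/\delta^k$; the choices $\alpha=\delta'\eprime^2/(2n^2)$ with $\delta'=\delta^{k+1}$ and $\delta=\bigo{\eprime/(k(2e)^k)}$ set in \prettyref{step:deltainit} are engineered so that, provided $x^t-\lambda_1\ge\eprime/2$, the total error satisfies $\tg_k(x^t)=\mom_k(x^t)\paren{1\pm 1/(16n^{1/k})}$.

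\textbf{Iteration progress and termination.} Setting $X_i=1/(x^t-\lambda_i)$ gives $\mom_k(x^t)=\norm{X}_k^k$, $\mom_{k-1}(x^t)=\norm{X}_{k-1}^{k-1}$, and $\normi{X}=1/(x^t-\lambda_1)$, so \prettyref{lem:knorm} yields $x^t-\lambda_1\le\mom_{k-1}(x^t)/\mom_k(x^t)\le n^{1/k}(x^t-\lambda_1)$. Hence the exact update $\mom_{k-1}(x^t)/(4n^{1/k}\mom_k(x^t))$ lies in $[(x^t-\lambda_1)/(4n^{1/k}),\,(x^t-\lambda_1)/4]$, which guarantees no overshoot and a contraction factor at most $1-1/(4n^{1/k})$. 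The $(1\pm 1/(16n^{1/k}))$ approximation error and the additive $\eprime/n$ rounding in \prettyref{step:update} weaken the guaranteed contraction to $1-1/(8n^{1/k})$ as long as $x^t-\lambda_1\ge\eprime/2$, so \prettyref{lem:noofits1} with $q(n)=8n^{1/k}$ shows termination within the prescribed $\lceil 16n^{1/k}\log(1/\e)\rceil$ iterations. When \prettyref{step:stop} fires, the upper bound in \prettyref{lem:knorm} gives $x^t-\lambda_1\le 4n^{1/k}\ut(1+o(1))\le\e/2$; if instead the loop exhausts at \prettyref{step:endofalg}, the contraction iterated gives $x^t-\lambda_1\le\e$.

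\textbf{Complexity and main obstacle.} The parameter choices give $\log(1/\delta)=\bigo{k+\log(n/\e)}$ and $\log(1/\alpha)=\bigo{k^2+k\log(n/\e)}$. Since \prettyref{step:update} rounds $x^t$ to precision $\eprime/n$, every queried point (of the form $x^t+j\delta$ or $x^t+j\delta+\alpha$ with $0\le j\le k$) has bit complexity $\bigo{k^2+k\log(n/\e)}$. Each iteration invokes $\charp$ at $2(k+1)$ points via \prettyref{eq:tgdef}, giving the claimed $\bigo{kn^{1/k}\log(1/\e)}$ query bound and, composed with the oracle cost at this bit precision, the time complexity $\tbigo{kn^{1/k}\log(1/\e)\ftime{k^2+k\log(n/\e)}}$. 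The main technical obstacle is the approximation step: the divided-difference $1/\delta^k$ blowup must be absorbed by taking $\alpha$ polynomially smaller than $\delta^{k+1}$, while simultaneously the Taylor-truncation error $\bigo{\delta\cdot\mom_{k+2}(x^t)}$ must be shown small relative to $\mom_{k+1}(x^t)$. The latter reduces to the pointwise ratio bound $\mom_{k+2}(x^t)/\mom_{k+1}(x^t)\le 1/(x^t-\lambda_1)\le 2/\eprime$, which is exactly where the loop invariant $x^t-\lambda_1\ge\eprime/2$ maintained via \prettyref{step:stop} is essential.
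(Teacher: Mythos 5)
Your proposal follows the paper's proof in structure: establish a multiplicative approximation of $\mom_k$ by $\tg_k$ (the paper's \prettyref{lem:approx-der}), use the norm inequality (\prettyref{lem:knorm}) to bound the update $\ut$ from above and below in terms of $x^t-\lambda_1$ (the paper's \prettyref{lem:noofits2}), handle the stopping rule and the invariant that keeps $x^t$ in the valid range, and bound bit complexity via the precision of $\alpha$. The bit-complexity bookkeeping matches the paper's.

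There is one concrete inaccuracy in the middle that you should fix. You claim the parameter choices are ``engineered so that $\tg_k(x^t)=\mom_k(x^t)(1\pm 1/(16n^{1/k}))$.'' This is not what the parameters achieve, and it is not provable from them. The choice $\delta = \eprime/(16(2e)^k k)$ is set just large enough to cancel the $\tfrac{(2k)^k}{k!}\cdot k \approx (2e)^k k$ factor that the $k$-fold divided difference introduces (see the chain of bounds in the paper's proof of \prettyref{lem:approx-der}); after this cancellation and using $\xi \le 1/\eprime$, the error comes out to exactly a $1/4$ multiplicative factor, with no extra $n^{1/k}$ headroom. From the correct $(1\pm 1/4)$ bound, the norm inequality gives $\ut \in \bigl[\tfrac{3(x^t-\lambda_1)}{20n^{1/k}},\,\tfrac{5(x^t-\lambda_1)}{12}\bigr]$, and after the $\eprime/n$ rounding the per-iteration contraction is $1-1/(16n^{1/k})$, not $1-1/(8n^{1/k})$; the iteration budget $16n^{1/k}\log(1/\e)$ is chosen precisely to absorb this. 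Your claimed tighter error and your weaker invariant $x^t-\lambda_1 \ge \eprime/2$ (the paper maintains $\ge \eprime$, which is what \prettyref{lem:tg1} and \prettyref{lem:approx-der} require in the form stated) both happen not to break the final conclusion, since your claimed contraction is still well within the iteration budget, but as written the intermediate bound is unjustified and false. Reworking with the correct $1/4$ error closes the gap and reproduces the paper's constants.
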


\subsection{Analysis}

We start with a simple fact about the derivaties of polynomials.
\begin{fact}
\label{fact:charpderivative}
For a degree $n$ polynomial $\charp(x) = \Pi_{i \in [n]}(x - \lambda_i)$, and for $k \in \N$, 
$k \leq n$, we have
\[ \charp^{(k)}(x) = k! \charp(x) \paren{ \sum_{ \substack{ S \subset [n] \\ \Abs{S} = k  } } 
		\Pi_{i \in S} \frac{1}{x - \lambda_i} } \mper  \]
\end{fact}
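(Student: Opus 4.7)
The plan is to prove this identity by induction on $k$, which amounts to carefully bookkeeping the combinatorics of the Leibniz/product rule applied to $\charp(x)=\prod_{i\in[n]}(x-\lambda_i)$. It is cleaner to first prove the equivalent ``un-divided'' form
\[
\charp^{(k)}(x) = k! \sum_{\substack{S\subset [n]\\ |S|=k}} \prod_{j\notin S}(x-\lambda_j),
\]
and then factor out $\charp(x)=\prod_j(x-\lambda_j)$ at the end to recover the stated expression $k!\,\charp(x)\sum_{|S|=k}\prod_{i\in S}\tfrac{1}{x-\lambda_i}$.

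For the base case $k=0$, both sides equal $\charp(x)$ (the empty product is $1$ and $0!=1$). For the inductive step, assume the formula for $k$ and differentiate once more. Since each factor $(x-\lambda_j)$ is linear, $\tfrac{d}{dx}\prod_{j\notin S}(x-\lambda_j) = \sum_{\ell\notin S}\prod_{j\notin S\cup\{\ell\}}(x-\lambda_j)$, so
\[
\charp^{(k+1)}(x) = k!\sum_{|S|=k}\ \sum_{\ell\notin S}\prod_{j\notin S\cup\{\ell\}}(x-\lambda_j).
\]
The key combinatorial observation is that each $(k+1)$-subset $T\subseteq[n]$ arises from exactly $|T|=k+1$ pairs $(S,\ell)$ with $S=T\setminus\{\ell\}$, so the double sum collapses to $(k+1)\sum_{|T|=k+1}\prod_{j\notin T}(x-\lambda_j)$, yielding the desired formula with $(k+1)!$ in front.

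There is no real obstacle here; the only thing to be careful about is the counting argument in the inductive step (ensuring each $T$ is produced exactly $k+1$ times and not, say, overcounted by ordering). An alternative, essentially equivalent, route is a direct combinatorial proof via the multi-Leibniz rule: $\charp^{(k)}(x)$ is a sum over ordered $k$-tuples $(j_1,\dots,j_k)$ of the quantity obtained by differentiating the $(x-\lambda_{j_s})$ factor at step $s$; factors can only be differentiated once each (a second derivative vanishes), so only tuples of distinct indices contribute, each unordered $k$-subset $S$ is counted $k!$ times, and the remaining factor is $\prod_{j\notin S}(x-\lambda_j)$. Either presentation gives the claimed identity in a few lines.
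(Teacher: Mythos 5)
Your proof is correct and takes essentially the same approach as the paper's: induction on $k$, differentiating once and collapsing the resulting double sum via the observation that each $(k+1)$-subset arises from exactly $k+1$ pairs $(S,\ell)$. The only cosmetic differences are that you work with the un-divided product $\prod_{j\notin S}(x-\lambda_j)$ and start the induction at $k=0$, whereas the paper keeps the form $\charp(x)\prod_{i\in S}(x-\lambda_i)^{-1}$ throughout and starts at $k=1$.
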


\begin{proof}
We prove this by induction on $k$. 
For $k = 1$, this is true.  
We assume that this statement holds for $k = l$ ($l<n$), and show that it holds for $k = l+1$.
\begin{align*}
\charp^{(l+1)}(x) & = \frac{\diff}{\dx}\charp^{(l)}(x) = l! \frac{\diff}{\dx} 
		\paren{\sum_{ \substack{ S \subset [n] \\ \Abs{S} = l  } } 
		\charp(x)\Pi_{i \in S} \frac{1}{x - \lambda_i} } 
  = l!  \sum_{ \substack{ S \subset [n] \\ \Abs{S} = l  } } \sum_{j \in [n] \setminus S}
		\paren{\charp(x) \Pi_{i \in S} \frac{1}{x - \lambda_i}} \frac{1}{x - \lambda_j} \\
 & = l! \sum_{ \substack{ S \subset [n] \\ \Abs{S} = l +1}} (l+1)\charp(x) \Pi_{i \in S} \frac{1}{x - \lambda_i} 
	 = (l+1)! \charp(x) \sum_{ \substack{ S \subset [n] \\ \Abs{S} = l +1}} \Pi_{i \in S} \frac{1}{x - \lambda_i} \mper 
\end{align*}
\end{proof}

Next, we analyze $\tg_1(\cdot)$.
\begin{lemma}
\label{lem:tg1}
For $x \in [\lambda_1 + \eprime,1]$, 
$\tg_1(x)$ defined in \prettyref{alg:faster-newton} satisfies 
$ \mom_1(x) \leq \tg_1(x) \leq \mom_1(x) + \delta'  $.
\end{lemma}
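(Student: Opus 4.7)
The plan is to estimate the forward-difference error using Taylor's theorem and then bound the resulting second-derivative term using \prettyref{fact:charpderivative}. By Taylor's theorem with Lagrange remainder there exists some $\xi \in (x, x+\alpha)$ such that
\[ \charp(x+\alpha) = \charp(x) + \alpha\, \charp'(x) + \tfrac{\alpha^2}{2} \charp''(\xi). \]
Dividing through by $\alpha\, \charp(x)$ and recognizing that $\mom_1(x) = \charp'(x)/\charp(x)$ gives the exact identity
\[ \tg_1(x) - \mom_1(x) = \frac{\alpha}{2} \cdot \frac{\charp''(\xi)}{\charp(x)}. \]

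For the lower bound I will use that all roots $\lambda_i$ lie in $[0,1/2]$ and $x \geq \lambda_1 + \eprime$ so in particular $x > \lambda_i$ for every $i$; hence $\charp(x) > 0$. Since $\xi > x > \lambda_1$, \prettyref{fact:charpderivative} applied with $k=2$ expresses $\charp''(\xi)$ as $2\charp(\xi)$ times a sum of positive terms, so $\charp''(\xi) \geq 0$. The correction term above is therefore nonnegative, yielding $\tg_1(x) \geq \mom_1(x)$.

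For the upper bound I would bound the factor $\charp''(\xi)/\charp(x)$ as follows. Using \prettyref{fact:charpderivative},
\[ \charp''(\xi) = 2\, \charp(\xi) \sum_{i<j} \frac{1}{(\xi-\lambda_i)(\xi-\lambda_j)} \leq \frac{n^2\, \charp(\xi)}{\eprime^2}, \]
since each $\xi - \lambda_i \geq x - \lambda_1 \geq \eprime$. To convert $\charp(\xi)$ into $\charp(x)$, I would write
\[ \frac{\charp(\xi)}{\charp(x)} = \prod_{i=1}^n \frac{\xi-\lambda_i}{x-\lambda_i} \leq \Paren{1 + \frac{\alpha}{\eprime}}^n \leq e^{n\alpha/\eprime}, \]
and verify from the definitions $\alpha = \delta'\eprime^2/(2n^2)$ and $\delta', \eprime \leq 1$ that $n\alpha/\eprime \leq 1$, so this ratio is bounded by an absolute constant. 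Combining these estimates yields $\tg_1(x) - \mom_1(x) \leq \alpha \cdot O(n^2/\eprime^2) \leq \delta'$ after verifying the constants (which the chosen value of $\alpha$ is calibrated to achieve).

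The main obstacle is a bookkeeping one: the exponent $n$ in the product for $\charp(\xi)/\charp(x)$ is the reason $\alpha$ must be polynomially small in $1/n$ and $\eprime$ — if this bound were not uniformly $O(1)$, the upper bound would blow up. Everything else is straightforward once the identity from Taylor's theorem is written down; one just needs to track constants carefully to match $\delta'$ exactly.
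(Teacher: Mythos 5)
Your proof is correct, and it takes a genuinely (if modestly) different route from the paper's. The paper writes out the full Taylor expansion of $\charp(x+\alpha)$ around $x$ (a finite sum since $\charp$ has degree $n$), applies \prettyref{fact:charpderivative} to every term $\charp^{(j)}(x)$ with $j\geq 2$, bounds $\binom{n}{j}\leq n^j$ and $\frac{1}{x-\lambda_i}\leq\frac{1}{x-\lambda_1}$, and controls the tail with a geometric series $\sum_{j\geq 2}(\alpha n/(x-\lambda_1))^j$. You instead invoke Taylor with Lagrange remainder, so the whole error is a single second-derivative term $\frac{\alpha}{2}\charp''(\xi)/\charp(x)$ evaluated at an intermediate point $\xi\in(x,x+\alpha)$. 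This buys you a lower-order expansion (only $k=2$ in \prettyref{fact:charpderivative}), but the price is the change of evaluation point: you must then control $\charp(\xi)/\charp(x)=\prod_i\frac{\xi-\lambda_i}{x-\lambda_i}\leq(1+\alpha/\eprime)^n\leq e^{n\alpha/\eprime}$, which you correctly observe is $O(1)$ because $n\alpha/\eprime=\delta'\eprime/(2n)\leq 1/4$. Both approaches therefore hinge on $\alpha$ being sufficiently small relative to $\eprime/n$, and both then reduce to the same $\alpha n^2/\eprime^2\lesssim\delta'$ calculation; your explicit arithmetic gives roughly $\frac{e^{1/4}}{4}\delta'<\delta'$, matching the lemma. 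The lower bound is handled identically in spirit (nonnegativity of the relevant derivative terms, which you justify via $\xi>\lambda_1$ and \prettyref{fact:charpderivative}). Net: your argument is a bit more elementary in the Taylor machinery it uses, while the paper's avoids the intermediate-point bookkeeping by keeping everything anchored at $x$.
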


\begin{proof}

Using $\xi \defeq 1/(x-\lambda_1)$ for brevity,
\begin{align*}
\frac{\charp(x+\alpha) - \charp(x) }{\alpha} & = \frac{1}{\alpha} \paren{\sum_{j = 0}^\infty 
		\frac{\alpha^j}{j!} \charp^{(j)}(x) - \charp(x)} 
	& \textrm{(Taylor series expansion of $ \charp \paren{\cdot}$ )} \\
 & = \charp'(x) + \frac{1}{\alpha} \paren{\sum_{j = 2}^\infty \alpha^j \charp(x) 
	\sum_{ \substack{ S \subset [n] \\ \Abs{S} = j}} \Pi_{i \in S} \frac{1}{x-\lambda_i} }  
		& \textrm{(Using \prettyref{fact:charpderivative})}\\
 & \leq \charp'(x) + \frac{\charp(x)}{\alpha} \paren{\sum_{j = 2}^\infty \alpha^j n^j \xi^j }
	& \paren{ \textrm{Using} \frac{1}{x - \lambda_i} \leq \xi}	\\
 & \leq \charp'(x) + \charp(x) \frac{\alpha (n \xi)^2 }{1 - \alpha n \xi} \\
 & \leq \charp'(x) + \delta' \charp(x) & \textrm{(Using definition of $\alpha$)}\mper
\end{align*}
Since all the roots of $\charp(x)$ are in $[0,1/2]$ and $x \in [\lambda_1 + \eprime,1]$, 
we have$\charp(x) \leq 1$. Therefore,  
\[ \tg_1(x) = \frac{ \frac{1}{\alpha} \paren{ \charp(x + \alpha) - \charp(x)}}{\charp(x)} 
\leq \mom_1(x) + \delta' \mper \]
Next, since $f(x), x - \lambda_i \geq 0$,  
\[ \frac{\charp(x+\alpha) - \charp(x) }{\alpha} = \charp'(x) + \frac{1}{\alpha} \paren{\sum_{j = 2}^\infty \alpha^j \charp(x) 
	\sum_{ \substack{ S \subset [n] \\ \Abs{S} = j}} \Pi_{i \in S} \frac{1}{x-\lambda_i} }  
		\geq \charp'(x) \mper \]
Therefore,  $\tg_1(x) \geq \mom_1(x)$.

\end{proof}

The crux of the analysis is to show that $\tg_l(x)$ is ``close'' to $\mom_l(x)$.
This is summarised by the following lemma. 
\begin{lemma}[Main Technical Lemma]
\label{lem:approx-der}
For $x \in [\lambda_1 + \eprime,1]$, 
$\tg_{k+1}(x)$ defined in \prettyref{alg:faster-newton} satisfies
 \[ \Abs{ \tg_{k+1}(x) - \mom_{k+1}(x)} \leq \frac{1}{4} \mom_{k+1}(x) \mper \]
\end{lemma}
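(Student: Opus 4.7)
My plan is to split the error $\tg_{k+1}(x) - \mom_{k+1}(x)$ into two parts by writing $\tg_1(y) = \mom_1(y) + e(y)$, where $0 \le e(y) \le \delta'$ on $[\lambda_1+\eprime,1]$ by \prettyref{lem:tg1}.  Since the $k$-th forward difference $\Delta^k_\delta f(x) \defeq \sum_{j=0}^k (-1)^{k-j}\binom{k}{j} f(x+j\delta)$ is linear in $f$, I would analyse its action on $\mom_1$ and on $e$ separately and combine the two errors by the triangle inequality.

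For the $\mom_1$ piece, I would invoke the integral representation
\[ \Delta^k_\delta \mom_1(x) = \int_0^\delta \!\cdots\! \int_0^\delta \mom_1^{(k)}\!\Paren{x + u_1 + \cdots + u_k}\, du_1 \cdots du_k, \]
which is valid because $\mom_1$ is smooth on $[x, x+k\delta] \subset (\lambda_1, \infty)$.  Combined with \prettyref{lem:momprime}'s identity $\mom_1^{(k)}(y) = (-1)^k k!\,\mom_{k+1}(y)$, this gives
\[ \frac{(-1)^k}{k!\,\delta^k} \Delta^k_\delta \mom_1(x) = \int_{[0,1]^k} \mom_{k+1}\!\Paren{x + \delta(t_1+\cdots+t_k)}\, dt_1\cdots dt_k, \]
i.e.\ the main term is an average of $\mom_{k+1}$ at points slightly to the right of $x$.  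To bound its deviation from $\mom_{k+1}(x)$ termwise, I note that for each root $\lambda_i$ and each $s \in [0, k]$, the inequality $(1+u)^{-(k+1)} \ge 1 - (k+1)u$ applied with $u = \delta s/(x - \lambda_i)$, together with $x - \lambda_i \ge \eprime$, yields
\[ \frac{1}{(x + \delta s - \lambda_i)^{k+1}} \geq \frac{1}{(x - \lambda_i)^{k+1}}\Paren{1 - \frac{(k+1)\delta s}{\eprime}}. \]
Summing over $i$ gives $\mom_{k+1}(x+\delta s) \ge (1 - (k+1)k\delta/\eprime)\mom_{k+1}(x)$, while trivially $\mom_{k+1}(x+\delta s) \le \mom_{k+1}(x)$.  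With $\delta = \eprime/(16(2e)^k k)$ the slack factor $(k+1)k\delta/\eprime = (k+1)/(16(2e)^k)$ is at most $1/8$, so the main term lies in $[\tfrac{7}{8}\mom_{k+1}(x),\,\mom_{k+1}(x)]$.

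For the $e$ piece, I would use the crude bound
\[ \Abs{\frac{(-1)^k}{k!\,\delta^k}\Delta^k_\delta e(x)} \leq \frac{1}{k!\,\delta^k}\sum_{j=0}^k \binom{k}{j}\delta' = \frac{2^k\delta'}{k!\,\delta^k} = \frac{2^k\delta}{k!}, \]
where the last equality uses $\delta' = \delta^{k+1}$.  Substituting $\delta \le \eprime/(16(2e)^k k)$ makes this at most $\eprime/16$.  Since $x - \lambda_1 \le 1$ forces $\mom_{k+1}(x) \ge 1/(x-\lambda_1)^{k+1} \ge 1 \ge \eprime$, this contribution is at most $\tfrac{1}{16}\mom_{k+1}(x)$.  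Adding the two pieces yields $\Abs{\tg_{k+1}(x) - \mom_{k+1}(x)} \le (\tfrac{1}{8} + \tfrac{1}{16})\mom_{k+1}(x) < \tfrac{1}{4}\mom_{k+1}(x)$.

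The delicate step---and the reason for the particular choice of parameters in \prettyref{step:deltainit}---is the tension between two competing requirements: the finite difference amplifies the pointwise error of $\tg_1$ by a factor of $1/\delta^k$, so we need $\delta' \ll \delta^k$ (arranged by $\delta' = \delta^{k+1}$); simultaneously, $k\delta$ must be small compared to $\eprime$ in order for $\mom_{k+1}$ to be nearly constant across the averaging window $[x, x+k\delta]$.  The exponential factor $(2e)^k$ in the denominator of $\delta$ is precisely what makes both requirements hold with enough slack to land within relative error $1/4$.
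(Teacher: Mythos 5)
Your proof is correct and takes a genuinely different route for the dominant part of the argument. Both you and the paper split the error by linearity into the finite difference of $\mom_1$ (the main term) and the finite difference of the perturbation $e = \tg_1 - \mom_1$, and both handle the second piece with the same crude $\tfrac{2^k\delta'}{k!\,\delta^k}$ bound. The difference is in the main term: the paper expands each $\mom_1(x+(k-i)\delta)$ in a Taylor series around $x$, invokes the combinatorial identity $\sum_{i=0}^k (-1)^i \binom{k}{i} i^j = 0$ for $j<k$ and $=(-1)^k k!$ for $j=k$ (\prettyref{fact:sjseries}) to isolate the leading term $\mom_{k+1}(x)$, and then bounds the remaining infinite tail by a geometric series using $\mom_{j+1}(x) \le \xi^{j-k}\mom_{k+1}(x)$ and Stirling. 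You instead invoke the exact integral representation $\Delta^k_\delta \mom_1(x) = \int_{[0,\delta]^k} \mom_1^{(k)}(x+\sum u_i)\,du$, which, combined with \prettyref{lem:momprime}, expresses the main term as a genuine average of $\mom_{k+1}$ over a small window to the right of $x$; a single application of $(1+u)^{-(k+1)} \ge 1-(k+1)u$ together with the monotone decrease of $\mom_{k+1}$ then pins this average in $[(1-\tfrac{(k+1)k\delta}{\eprime})\mom_{k+1}(x),\,\mom_{k+1}(x)]$. Your route avoids \prettyref{fact:sjseries} and the infinite-series bookkeeping entirely, at the cost of quoting the integral form of the finite-difference remainder; the paper's route is more self-contained (only Taylor series and binomial sums) but more calculation-heavy. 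Quantitatively the two give the same order of estimate, and your final $\tfrac{1}{8}+\tfrac{1}{16}=\tfrac{3}{16}$ is slightly tighter than the paper's $\tfrac14$. One minor point worth flagging in your writeup, though it affects the paper's proof equally: applying \prettyref{lem:tg1} at the shifted points $x+j\delta$ technically requires $x+k\delta \le 1$, which the stated hypothesis $x\le 1$ does not guarantee; in fact the upper endpoint in \prettyref{lem:tg1} is never used in its proof, so this is harmless, but it deserves a sentence.
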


\begin{proof}
We first bound the quantity $h_{k+1}(x)$ defined as follows.
\begin{align*}
h_{k+1}(x) & \defeq \frac{(-1)^k}{k!\delta^{k}} \paren{ \sum_{i = 0}^{k} (-1)^i \binom{k}{i}  \mom \paren{x + (k-i)\delta }} \\
 & = \frac{(-1)^k}{k!\delta^{k}} \paren{ \sum_{i = 0}^{k} (-1)^i \binom{k}{i} 
		\sum_{j = 0}^\infty \frac{ ((k-i)\delta)^j  }{j!} \mom^{(j)}(x) } 
	& \textrm{(Taylor series expansion of $ \mom \paren{\cdot}$ )} \\
 & = \frac{(-1)^k}{k!\delta^{k}} \paren{ \sum_{j=0}^\infty  \frac{\delta^j \mom^{(j)}(x)}{j!}  
		\sum_{i = 0}^{k} (-1)^i \binom{k}{i} (k-i)^j}  
 	& \textrm{(Rearranging summations)} \\
 & = \frac{(-1)^k}{k!\delta^{k}} \paren{ \sum_{j=0}^\infty  \frac{\delta^j \mom^{(j)}(x)}{j!} 
		(-1)^k \sum_{i = 0}^{k} (-1)^i \binom{k}{i} i^j} 
 	& \textrm{(Rearranging summation)} \\
 & = \mom_{k+1}(x) + \frac{(-1)^k}{k!\delta^{k}} \paren{ \sum_{j=k+1}^\infty  \frac{\delta^j \mom^{(j)}(x)}{j!} 
		(-1)^k \sum_{i = 0}^{k} (-1)^i \binom{k}{i} i^j} & \textrm{(Using \prettyref{fact:sjseries})} \mper
\end{align*}
Using $\xi \defeq 1/(x-\lambda_1)$ for brevity,  
\begin{align*}
\Abs{ h_{k+1}(x) - \mom_{k+1}(x)} & = \Abs{\frac{(-1)^k}{k!} \paren{  \sum_{j=k+1}^\infty  \frac{\delta^{j-k} \mom^{(j)}(x)}{j!} 
		 \sum_{i = 0}^{k} (-1)^i \binom{k}{i} i^j }} \\ 
 & \leq \frac{1}{k!} \paren{  \sum_{j=k+1}^\infty  \delta^{j-k} \mom_{j+1}(x) 
		 \paren{ \sum_{i = 0}^{k}  \binom{k}{i} i^j }} 
		& \textrm{(Using \prettyref{lem:momprime})} \\
 & \leq \frac{1}{k!} \sum_{j=k+1}^\infty \delta^{j-k} \paren{ \mom_{k+1}(x) \xi^{j-k}} k^j 2^k 
	& \textrm{(Using $g_{j+1}(x) \leq \xi^{j-k} g_{k+1}(x)  $)}  \\
 & = \mom_{k+1}(x)  \frac{(2k)^k}{k!} \sum_{p=1}^\infty \paren{k \delta \xi}^{p}  
	& \textrm{(Substituting $p$ for $j-k$) }\\
 & = \mom_{k+1}(x)  \frac{(2k)^k}{k!} \frac{k \delta \xi}{1 - k \delta \xi} \mper
\end{align*}
Next, using \prettyref{lem:tg1} and \prettyref{eq:tgdef}, we have 
\begin{equation} 
\Abs{\tg_{k+1}(x) - h_{k+1}(x)}  \leq \Abs{\frac{1}{k!\delta^{k}} 
		\paren{ \sum_{i = 0}^{k} (-1)^i \binom{k}{i}  \delta'}} 
	\leq \frac{\delta' 2^k}{k! \delta^k} \mper 
\end{equation}

\begin{align*}
\Abs{\tg_{k+1}(x) - \mom_{k+1}(x)} & \leq \Abs{h_{k+1}(x) - \mom_{k+1}(x)} + \Abs{\tg_{k+1}(x) - h_{k+1}(x)} \\
 & \leq \mom_{k+1}(x)  \frac{(2k)^k}{k!} \frac{k \delta \xi}{1 - k \delta \xi} + \frac{\delta' 2^k}{k! \delta^k} \\
 & \leq \mom_{k+1}(x) 4 k \delta \xi \frac{(2k)^k}{k!} & \textrm{(Using $g_{k+1}(x) \geq 1$ and $\delta' = \delta^{k+1} $)} \\
 & \leq \mom_{k+1}(x) (2e)^k 4 k \delta \xi & \textrm{(Using Stirling's approximation for $k!$)} \\
 & \leq \frac{1}{4} \mom_{k+1}(x) \mper
\end{align*}

\end{proof}

\begin{fact}
\label{fact:sjseries}
For $j,k \in \N$,   
\[ \sum_{i = 0}^{k} (-1)^i \binom{k}{i} i^j  = 
	\begin{cases} 0 & \textrm{for } j < k \\ (-1)^k k! & \textrm{for } j = k	\end{cases}  \mper \]
\end{fact}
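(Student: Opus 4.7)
The plan is to identify the sum $\sum_{i=0}^{k} (-1)^i \binom{k}{i} i^j$, up to a sign, with the $k$-fold forward difference of $f(x)=x^j$ evaluated at $0$, and then invoke the standard behavior of the finite difference operator on polynomials.

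First, I would introduce the forward difference operator $\Delta f(x) := f(x+1)-f(x)$ and verify by induction on $k$ (using Pascal's rule) that
\[ \Delta^k f(x) = \sum_{i=0}^{k} (-1)^{k-i} \binom{k}{i} f(x+i). \]
Specializing to $x=0$ and observing that $(-1)^{k-i}$ and $(-1)^{k+i}$ have the same parity, so $(-1)^{k-i}=(-1)^k(-1)^i$, the displayed target sum becomes $(-1)^k \Delta^k f(0)$ when $f(x)=x^j$. Hence \prettyref{fact:sjseries} reduces to showing $\Delta^k[x^j](0)=0$ for $j<k$ and $\Delta^k[x^k](0)=k!$.

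Second, I would use the well-known fact that if $p$ is a polynomial of degree $d$ with leading coefficient $a$, then $\Delta p$ is a polynomial of degree exactly $d-1$ with leading coefficient $d\cdot a$; this falls out of the binomial expansion of $(x+1)^d$. Iterating this on $p(x)=x^j$: after $j+1$ applications we reach the zero polynomial, so whenever $k>j$ we have $\Delta^k[x^j]\equiv 0$, and in particular the value at $0$ is $0$. When $j=k$, applying $\Delta$ exactly $k$ times to $x^k$ reduces the degree to $0$ and multiplies the leading coefficient by $k(k-1)\cdots 1=k!$, yielding the constant polynomial $k!$. Combining with the sign $(-1)^k$ extracted in the previous step gives both cases of the fact.

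I do not expect any genuine obstacle here; the only point demanding a little care is the sign bookkeeping when converting between $(-1)^i$ and $(-1)^{k-i}$. As a sanity check, the same identity can be obtained via inclusion-exclusion on the number of surjections $[j]\to[k]$, which equals $k!\,S(j,k)$ for the Stirling number of the second kind $S(j,k)$, and vanishes for $j<k$ while satisfying $S(k,k)=1$; the substitution $i\mapsto k-i$ converts that classical count into the formula displayed in the fact.
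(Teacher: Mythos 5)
Your proof is correct but takes a genuinely different route from the paper's. The paper proves the identity by introducing $S_j(x) := \bigl(x\frac{\diff}{\dx}\bigr)^j (1+x)^k$, expanding it as $\sum_{i=0}^k \binom{k}{i} i^j x^i$, and then evaluating at $x=-1$: for $j<k$ each application of $x\frac{\diff}{\dx}$ leaves $(1+x)$ as a factor, so $S_j(-1)=0$, while for $j=k$ only the $k!\,x^k$ term survives evaluation. You instead recognize the sum as $(-1)^k$ times the $k$-fold forward difference $\Delta^k[x^j](0)$, and appeal to the degree-lowering behavior of $\Delta$ on polynomials. Both arguments are clean and of comparable length. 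Your finite-difference interpretation has the advantage of being thematically tied to how this fact is actually used in the paper — namely, in \prettyref{lem:approx-der}, to justify that the $k$-th finite-difference quotient of $\tg_1$ recovers $\mom_{k+1}$ up to controllable error; it makes explicit \emph{why} this identity is the right one. The paper's Euler-operator route is a self-contained algebraic trick and perhaps slightly slicker on the $j=k$ case, where the surviving term is read off directly rather than tracked through a leading-coefficient recursion. The sign bookkeeping in your reduction ($(-1)^i = (-1)^k(-1)^{k-i}$) is handled correctly, and your inclusion-exclusion sanity check via surjections and Stirling numbers is a sound third perspective.
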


\begin{proof}
Define the polynomial $S_j(x)$ to be
\begin{equation}
\label{eq:Sjdifferentiation} 
S_j(x) \defeq \underbrace{ x \frac{\diff}{\dx} \ldots x\frac{\diff }{\dx}  }_{j \textrm{ times}} 
		(1+x)^k \mper
\end{equation}
Then,
\[ S_j(x) = \sum_{i = 0}^k \binom{k}{i} i^j x^i  \qquad \textrm{and} \qquad 
 S_j(-1) = \sum_{i = 0}^{k} (-1)^i \binom{k}{i} i^j \mper  \]
Now, for $j < k$, $(1+x)$ will be a factor of the polynomial in \prettyref{eq:Sjdifferentiation}.
Therefore,  $S_j(-1) = 0$ for $j <k$.
For $j=k$, out of the $k+1$ terms of $S_k(x)$ in \prettyref{eq:Sjdifferentiation}, the only term that does not have
a multiple of $(1+x)$ is $x^k k!$. Therefore, $S_k(-1) = (-1)^k k!$.
\end{proof}

Next we show that the update step in \prettyref{alg:faster-newton} (\prettyref{step:faster-newton-update})
makes sufficient progress in each iteration.

\begin{lemma}
\label{lem:noofits2}
For $x^t \in [\lambda_1 + \eprime,1]$,
\[ \frac{x^t - \lambda_1}{8 n^{1/k}}
 \leq \ut  \leq \frac{x^t - \lambda_1}{2} \mper \]
\end{lemma}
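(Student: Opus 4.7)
The plan is to combine the multiplicative approximation bound of \prettyref{lem:approx-der} with the norm inequality in \prettyref{lem:knorm}, applied to the vector of inverse gaps $X = \paren{1/(x^t-\lambda_1),\ldots,1/(x^t-\lambda_n)}$. Since $x^t \geq \lambda_1 \geq \lambda_i$ for every $i$, the entries of $X$ are non-negative with $\normi{X}=1/(x^t-\lambda_1)$ and $\norm{X}_j^j = \mom_j(x^t)$ for every $j$. Substituting into \prettyref{lem:knorm} and rearranging yields the clean two-sided estimate
\[ x^t - \lambda_1 \;\leq\; \frac{\mom_{k-1}(x^t)}{\mom_k(x^t)} \;\leq\; n^{1/k}\paren{x^t - \lambda_1}. \]

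Next, I would invoke \prettyref{lem:approx-der} to pass from the exact moments to their computable approximations. The lemma immediately gives $\tg_j(x^t) \in \brac{\tfrac{3}{4}\mom_j(x^t),\,\tfrac{5}{4}\mom_j(x^t)}$ both for $j = k$ and, by exactly the same argument applied at the smaller order, also for $j = k-1$. The only thing to verify is that the global parameters $\delta,\delta',\alpha$ fixed in \prettyref{step:deltainit} (chosen for the target order $k$) still suffice at order $k-1$; this is immediate because the error expression $\mom_{j+1}(x)(2e)^{j}\cdot 4 j \delta \xi$ that appears in the proof of \prettyref{lem:approx-der} is monotone in $j$, so the same $\delta$ controls all lower orders. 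Chaining these two-sided approximations then gives
\[ \frac{3}{5}\cdot\frac{\mom_{k-1}(x^t)}{\mom_k(x^t)} \;\leq\; \frac{\tg_{k-1}(x^t)}{\tg_k(x^t)} \;\leq\; \frac{5}{3}\cdot\frac{\mom_{k-1}(x^t)}{\mom_k(x^t)}. \]

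Multiplying through by $1/(4 n^{1/k})$ and composing with the moment-ratio bound above yields
\[ \frac{3(x^t-\lambda_1)}{20\, n^{1/k}} \;\leq\; \ut \;\leq\; \frac{5(x^t-\lambda_1)}{12}, \]
and the lemma follows since $3/20 \geq 1/8$ and $5/12 \leq 1/2$, which leaves some comfortable slack in the constants. I do not expect any serious obstacle here: the only delicate point is the reuse of \prettyref{lem:approx-der} at order $k-1$ under the single global choice of $\delta$, which is really just a one-line monotonicity observation about the error term rather than a genuine difficulty.
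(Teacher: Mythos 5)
Your proof is correct and follows essentially the same route as the paper's: combine the multiplicative bound of \prettyref{lem:approx-der} with \prettyref{lem:knorm} applied to the inverse-gap vector, then chain the two-sided estimates. The paper's proof writes the two chains directly without isolating the clean moment-ratio inequality $x^t - \lambda_1 \leq \mom_{k-1}(x^t)/\mom_k(x^t) \leq n^{1/k}(x^t-\lambda_1)$, and it does not explicitly address the reuse of \prettyref{lem:approx-der} at the two orders $k$ and $k-1$ under the single global choice of $\delta$; your monotonicity remark fills that (minor, implicit) gap.
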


\begin{proof}
We first prove the upper bound.
\[ \ut = \frac{1}{4 n^{1/k}}  \frac{ \tg_{k-1}(x^t)}{ \tg_k(x^t)}
\leq  \frac{1}{4n^{1/k}} \frac{(1 + 1/4) \mom_{k-1}(x^t)}{ (1 - 1/4) \mom_k(x^t)}
\leq \frac{x^t - \lambda_1}{2} \mper  \]
Here, the first inequality uses \prettyref{lem:approx-der}  
and the second inequality uses \prettyref{lem:knorm}
with the vector $\paren{ \frac{1}{x^t-\lambda_1}, \ldots, \frac{1}{x^t-\lambda_n} }$.
Next,
\[ \ut = \frac{1}{4n^{1/k}}  \frac{ \tg_{k-1}(x^t)}{ \tg_k(x^t)} 
\geq \frac{1}{4 n^{1/k}} \frac{ (1 - 1/4) \mom_{k-1}(x^t)  }{ (1 + 1/4) \mom_k(x^t) }
\geq \frac{ x^t - \lambda_1 }{8 n^{1/k} } \mper  \]
Here again, the first inequality uses \prettyref{lem:approx-der} 
and the second inequality uses \prettyref{lem:knorm}
with the vector $\paren{ \frac{1}{x^t-\lambda_1}, \ldots, \frac{1}{x^t-\lambda_n} }$.

\end{proof}

\paragraph{Putting it together.}
We now have all the ingredients to prove \prettyref{thm:analysis}. \prettyref{thm:main}
follows from \prettyref{thm:analysis} by picking $k = \log n$.
\begin{proof}[Proof of \prettyref{thm:analysis}]
We first analyze the output guarantees of \prettyref{alg:faster-newton}, and then we bound its bit complexity. 

\paragraph{Invariants and Output Guarantees.}
W.l.o.g., we may assume that $x^0 - \lambda_1 \geq \eprime$.
We will assume that $x^t - \lambda_1 \geq \eprime$, and show that if the algorithm does not
stop in this iteration, then $x^{t+1} - \lambda_1 \geq \eprime$, thereby justifying our assumption.

Since we do \prettyref{step:update} only when $\ut > \eprime$, we get using \prettyref{lem:noofits2} 
that 
\[ \tut \geq \ut - \frac{\eprime}{n} > \frac{\ut}{2} \geq \frac{x^t - \lambda_1}{16 n^{1/k}} \mper \]
Using \prettyref{lem:noofits1},  we get that for some iteration 
$t \leq 16 n^{1/k} \log (1/\e)$ of \prettyref{alg:faster-newton}, 
we will have $x^t - \lambda_1 \leq \e$.
Therefore, if the algorithm does not stop at \prettyref{step:stop}, and 
terminates at \prettyref{step:endofalg}, the $\lambda$ output by the algorithm
will satisfy $0 \leq \lambda - \lambda_1 \leq \e$.
If the algorithm does stop at \prettyref{step:stop}, i.e., 
$\ut \leq \eprime$, then from \prettyref{lem:noofits2} we get
\[ x^t - \lambda_1 \leq 8 n^{1/k}\, \ut \leq 8 n^{1/k}\, \eprime  \leq \e \mper  \]
Therefore, in both these cases, the algorithm outputs a $\lambda$ satisfying 
$0 \leq \lambda - \lambda_1 \leq \e$.

Next, if the algorithm does not stop in \prettyref{step:stop}, then we get from
\prettyref{lem:noofits2} that  
\begin{equation}
\label{eq:nostop}
 \eprime <  \ut \leq \frac{x^t - \lambda_1}{2} 
\end{equation}
and since $\tut \leq \ut$,  
\begin{align*}
x^{t+1} - \lambda_1 & = \paren{x^t - \tut} - \lambda_1 \geq \paren{x^t - \ut} - \lambda_1 
				& \textrm{(from \prettyref{step:faster-newton-update} of 
			\prettyref{alg:faster-newton})} \\
 & \geq  \frac{x^t - \lambda_1}{2} & \textrm{(from \prettyref{lem:noofits2})} \\            
 & > \eprime   & \textrm{(from \prettyref{eq:nostop})}\mper
\end{align*}
Therefore, if we do not stop in interation $t$, then we ensure that  
$ x^{t+1} - \lambda_1 \geq \eprime$.

\paragraph{Bit Complexity.}
We now bound the number of bit operations performed by the algorithm. 
We will show by induction on $t$ that the bit complexity of each $x^t$ is 
\begin{equation}
\label{eq:bitcompxt}
\bitcomp{x^t} \leq \log (n/\eprime) \mper
\end{equation}
We will assume that $\bitcomp{x^t} \leq \log (n/\eprime)$. We use this to
bound the number of bit operations performed in each step of \prettyref{alg:faster-newton} 
and to show that $\bitcomp{x^{t+1}} \leq \log (n/\eprime)$.

Each computation of $\tg_1(\cdot)$ involves two computations of $\charp(\cdot)$
and one division by $\charp(\cdot)$. The bit complexity of the locations at which 
$\charp(\cdot)$ is computed can be upper bounded by 
\[ \bitcomp{x^t} + \bitcomp{k \delta} + \bitcomp{\alpha} = \bigo{\bitcomp{\alpha}}  \mper    \]
From our assumption that $\ftime{n_1} \leq \ftime{n_2},\ \forall n_1 \leq n_2$, 
and that $\ftime{cn} = \bigo{\ftime{n}}$, 
we get that 
the bit complexity of each of these $\charp(\cdot)$ computations can be bounded by
$\bigo{\ftime{\bitcomp{\alpha}}}$.
Since, division can be done in nearly linear time \cite{ss71},
the bit complexity of the computation of $\tg_1(\cdot)$ is 
$\tbigo{\ftime{\bitcomp{\alpha}}}$.

The computation of the $\tg_k(\cdot)$ involves $k$ computations of $\tg_1(\cdot)$ and
one division by $\delta^k$,
and therefore can be done using $\tbigo{k \ftime{\bitcomp{\alpha}}}$ bit operations.
Next, the computation of $u^t$ involves computing the ratio of $\tg_{k-1}(x^t)$ and $\tg_k(x^t)$,  
both of which have bit complexity $\tbigo{k \ftime{\bitcomp{\alpha}}}$. 
Therefore, $u^t$ can be computed in $\tbigo{k \ftime{\bitcomp{\alpha}}}$
bit operations \cite{ss71}.
Finally, since $x^{t+1} = x^t - \tut$, we get that 
$\bitcomp{x^{t+1}} =  \bitcomp{x^t - \tut} \leq \log (n/\eprime)$.
For our choice of parameters
\[ \bitcomp{\alpha} = \log \paren{\frac{2n^2 (16k)^{k+1} (2e)^{k^2+k}}{\eprime^{k+3}} } 
 = \bigo{k^2 + k\log (n/\e)} \mper \]

Finally, since the number of iterations in the algorithm is at most $16 n^{1/k}  \log \paren{1/\e }$,
the overall query complexity of the algorithm is $\bigo{ n^{1/k}  \log \paren{1/\e } \cdot k}$, 
and the overall bit complexity (running time) is 
\[\tbigo{  n^{1/k} \log \paren{1/\e }\cdot k \ftime{ k^2 +k \log (n/\e) }} \mper  \]
\end{proof}

\subsection{Computing the top eigenvalue of a matrix}
\label{sec:matrix}

Our algorithm (\prettyref{thm:main}) uses an algorithm the compute the determinant of a matrix as a
subroutine.
Computing the determinant of a matrix has many applications in theoretical computer science
and is a well studied problem. We refer the reader to \cite{kv04} for a survey.
The algorithm for computing the determinant of a matrix with the current fastest asymptotic running time 
is due to Storjohann \cite{s05}.
\begin{theorem}[\cite{s05}]
\label{thm:fast-determinant}
Let $A \in \Z^{n \times n}$. There exists a Las Vegas algorithm that computes $\det (A)$
using an expected number of $\bigo{ n^{\omega} \log^2 n \log \fnorm{A} }$ bit operations.
\end{theorem}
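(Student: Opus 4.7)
The plan is to combine Hadamard's inequality with modular $p$-adic arithmetic and high-order lifting, relying on fast matrix multiplication for the basic primitives. Since $\abs{\det(A)} \le \fnorm{A}^n$ by Hadamard, the determinant fits in $B \defeq \bigo{n \log \fnorm{A}}$ bits, so it suffices to compute $\det(A)$ modulo an integer of that size.

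A first attempt is the Chinese Remainder approach: pick many primes $p_i$ of size $\polylog(n)$, compute $\det(A) \bmod p_i$ for each via an LUP factorization over $\mathbb{F}_{p_i}$ using fast matrix multiplication in $\tbigo{n^\omega}$ bit operations, and reassemble by CRT. One needs $\Theta(B/\log n)$ primes, for a total cost of $\tbigo{n^{\omega+1} \log \fnorm{A}}$ bit operations --- a factor of $n$ above the claim.

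The central step is to replace multi-prime CRT by a single-prime $p$-adic lifting. Sample a prime $p$ of $\polylog(n)$ size until $p \nmid \det(A)$; a uniformly random prime from a pool of size $\poly(n)$ succeeds with constant probability, which is where the Las Vegas randomness enters. Compute $A^{-1} \bmod p$ once, in $\tbigo{n^\omega}$ bit operations, via fast matrix inversion. Then lift $A^{-1}$ $p$-adically in doubling rounds, extending the known $A^{-1} \bmod p^{2^i}$ to $A^{-1} \bmod p^{2^{i+1}}$ using $\bigo{1}$ matrix multiplications at the current precision. After $\bigo{\log B}$ rounds the precision $p^k$ exceeds $2 \fnorm{A}^n$, and $\det(A)$ is extracted from the lifted inverse, either via rational reconstruction of $A^{-1} b$ for a random integer vector $b$, or directly through Storjohann's determinant certificate.

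The main obstacle --- in fact the substance of Storjohann's result --- is the amortization of the lifting. Naively the $i$'th doubling round operates on entries of bit size $2^i \log p$, and the total cost compounds back to $\tbigo{n^{\omega+1} \log \fnorm{A}}$. The shifted number system reorganizes the $p$-adic data in a compressed basis so that the matrix multiplication in each round runs on $\polylog(n)$-bit operands while still producing the correct lifted residue, yielding an amortized per-round cost of $\tbigo{n^\omega}$; summing over the $\bigo{\log n + \log \log \fnorm{A}}$ rounds gives the claimed $\tbigo{n^\omega \log^2 n \log \fnorm{A}}$ bit complexity. Finally, the Las Vegas aspect is handled by recomputing $\det(A) \bmod q$ for one independent random prime $q$ and checking consistency, which catches any unlucky choice of $p$ at a cost of only one additional $\tbigo{n^\omega}$ modular determinant.
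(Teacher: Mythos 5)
This theorem is cited from Storjohann \cite{s05} and is used as a black box; the paper gives no proof of it, so there is nothing internal to compare your argument against. Your write-up is best read as a survey-level outline of Storjohann's algorithm rather than a proof, and judged on those terms it captures the right ingredients (Hadamard bound, why naive CRT loses a factor of $n$, single-prime $p$-adic high-order lifting with doubling, the shifted number system as the amortization device, a Las Vegas consistency check) but leaves the load-bearing step as an assertion. The sentence ``the shifted number system reorganizes the $p$-adic data in a compressed basis so that the matrix multiplication in each round runs on $\polylog(n)$-bit operands while still producing the correct lifted residue'' is precisely the theorem one would need to prove; as written it is a restatement of the desired complexity, not an argument for it. Similarly, ``extracting'' $\det(A)$ from a high-order residue of $A^{-1}$ is not a one-liner: Storjohann's actual route goes through computing the largest invariant factor $s_n$ via rational reconstruction of $A^{-1}b$ for random $b$, then iteratively peeling off invariant factors, and the $\log^2 n$ in the bound comes from bounding the total work across those recursive stages. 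You gesture at both the rational-reconstruction route and the ``determinant certificate,'' which is the right instinct, but neither is developed enough to be called a proof.

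If you intend this as a citation-with-context (as the paper does), it is serviceable and roughly accurate. If you intend it as a self-contained proof, the gap is the amortization analysis of high-order lifting under the shifted number system --- that is the substance of \cite{s05}, and it cannot be waved through in a sentence.
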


\begin{proof}[Proof of \prettyref{thm:main}]
Using \prettyref{thm:fast-determinant}, each computation of $\charp(x) = \det(xI - A)$ can be done in time  
\[ \bigo{n^{\omega} \log^2n \log \paren{\fnorm{A}/\alpha}}  
	= \bigo{n^{\omega} \log^2n \paren{ k^2 \log \paren{ \fnorm{A}/\e }}} \mper \]
Using \prettyref{thm:analysis} with $k = \ceil{\log n}$, the overall bit complexity (running time) of \prettyref{alg:faster-newton} is 
\[ \bigo{ n^{\omega } \log^{5} n \log^2 \paren{\fnorm{A}/\e }}  \mper  \]
\end{proof}

\paragraph{Acknowledgement.} We are grateful to Ryan O' Donnell for helpful discussions, 
and to Yin Tat Lee, Prasad Raghavendra, Aaron Schild and Aaron Sidford for pointing us to the 
finite difference method for approximating higher derivatives efficiently. 

\bibliography{bibfile,bibfileacm}
\bibliographystyle{amsalpha}
\end{document}